\documentclass{article}

\usepackage[T1]{fontenc}
\usepackage{lmodern}
\usepackage{amsmath,amssymb,amsthm}
\usepackage{hyperref}
\usepackage{subcaption}

\usepackage{booktabs} 

\usepackage[capitalise]{cleveref}

\newtheorem{theorem}{Theorem}
\newtheorem{lemma}{Lemma}

\newtheorem*{remark}{Remark}

\crefname{theorem}{Theorem}{Theorems}
\crefname{lemma}{Lemma}{Lemmas}
\crefname{example}{Example}{Examples}
\crefname{corollary}{Corollary}{Corollaries}

\newcommand{\listsafesectioning}{%
  \let\svsubsubsection\subsubsection
  \let\svparagraph\paragraph
  \renewcommand\subsubsection[1]{\leavevmode\svsubsubsection{##1}}%
  \renewcommand\paragraph[1]{\leavevmode\svparagraph{##1}}%
}

\usepackage{amsfonts}
\usepackage{xcolor}
\usepackage{enumitem}
\setlist{nosep}
\usepackage{tkz-graph}  
\usetikzlibrary{shapes.geometric}
\usetikzlibrary{positioning,fit}

\usepackage{xcolor}
\usepackage{caption}
\usepackage{tabularray}
\UseTblrLibrary{booktabs}
\usepackage{xspace}
\usepackage[most]{tcolorbox}
\usepackage{tabularx}

\usepackage{natbib}

\newcommand{\R}{\mathbb{R}}
\newcommand{\E}{\mathbb{E}}
\newcommand{\N}{\mathcal{N}}

\newcommand{\St}{\mathcal{S}}
\newcommand{\pr}{\mathbb{P}}

\newcommand{\I}[1]{\mathbb{I}[#1]}

\newcommand{\V}{\mathbb{V}}

\newcommand{\MinVar}{\textsc{MinVar}\xspace}
\newcommand{\StabVar}{\textsc{StableVar}\xspace}
\newcommand{\policy}{d}

\newcommand{\qt}{\quad\quad\quad}

\newcommand{\A}{\mathcal{A}}
\newcommand{\Ft}[1]{\mathcal{F}_{#1}}

\usepackage{authblk}

\renewcommand\Authfont{\fontsize{12}{14.4}\selectfont} 
\renewcommand\Affilfont{\fontsize{11}{10.8}\itshape} 

\newcommand{\Vhat}{\hat{V}_T}

\usepackage{longtable,booktabs,array}
\usepackage{calc} 

\newcommand{\titleString}{Statistical Inference for Continuous Action Bandits}

\begin{document}

  \title{\Large \titleString}
  \date{}
  \author[1,2]{Raphael C. Kim}
  \author[2]{Michele Santacatterina}
  \author[1]{Ramin Zabih}
  \author[3]{Rajarshi Mukherjee}
  \author[2]{Ivan Diaz}
  \affil[1]{Cornell Tech, Cornell University (New York, NY)}
  \affil[2]{Department of Biostatistics, Department of Population Health, New York University School of Medicine (New York, NY)}
  \affil[3]{Department of Biostatistics, Department of Biostatistics, T.H. Chan School of Public Health (Boston, MA)}
  \maketitle
  
\begin{abstract}
Statistical inference under adaptively collected data is becoming increasingly popular across e-commerce and mobile health. Many methods exist for discrete action settings, ranging from weighting to debiasing approaches. Despite the ubiquity of continuous actions in experimentation from optimal pricing to precision dosing, to the best of our knowledge, methods that support statistical inference under continuous action, adaptively collected data remain underdeveloped. 
In this work, we extend kernel-smoothed doubly robust estimation from i.i.d. data to adaptively collected data with continuous actions. We study a family of adaptively weighted estimators, establish their mean squared error rates, asymptotic normality, and characterize an estimation lower bound as a function of regret. We conclude with recommendations for experimental design informing efficiency and regret, and support our theory with an extensive simulation study.
\end{abstract}



\newpage

\section{Introduction}
The aim of this paper is to introduce methods for statistical inference in continuous action, contextual bandits. When conducting statistical inference under adaptively collected data, estimators developed for independent, identically distributed (i.i.d.) settings may exhibit non-normal behavior \citep{zhan2021offpolicy}. 
Further, the objective of regret minimization or acting optimally, must be traded off with the ability to conduct statistical inference. To achieve normality, methods range weighting to debiasing under restrictions to the exploration process; see \citep{bibaut2025demystifying}.

The discrete action case is well-understood. Let $\pi_{t,a}$ denote the propensity score for action $a \in \{ 1 \dots A \}$ at time $t$. Informally, one may achieve $\sqrt{T}$ rates and statistical inference if one constrains an \textit{exploration} lower bound $\pi_{t,a} \geq t^{-1/2}$, meaning exploitation does not happen `too fast'; such an approach is achieved through a weighting approach: \citet{bibaut2021postcontextual} utilizes an adaptive variance estimator with importance sampled empirical processes, \citet{zhan2021offpolicy} directly extends Augmented Inverse Probability Weighting (AIPW) to adaptively collected data settings \citep{zhan2021offpolicy}, and \citet{lin2025semiparametric} consider a particular estimating equation for semiparametric inference for adaptively collected data that obeys a certain generalized linear model with limited nuisance modeling. By restricting the exploitation to polynomial decay in $t$, these self-normalized approaches fail to permit minimax optimal regret. \citet{khamaru2025nearoptimal} present an alternative weighting approach which permits \textit{near-}minimax optimal regret and statistical inference, though normality occurs at theoretically slower $\log(T)$ rate for sub-optimal arms.

In contrast, the continuous action setting is not well-understood. Methods for statistical inference primarily exist for when the design is fixed, or there is one timepoint. In continuous action settings, $\pr(A_t = \pi(S_t)) = 0$, making discrete off-policy evaluation methods unusable. The approach in i.i.d. settings to resolve this is smoothing.  \citet{kennedy2017nonparametric} regress a doubly-robust (DR) pseudo-outcome on the treatment, obtaining an oracle property under a product rate condition on the nuisances. \citet{colangelo2026double} establish asymptotic normality for kernel-smoothed doubly-robust estimation of continuous-treatment effects under i.i.d. sampling with estimated propensities. \citet{chernozhukov2019semiparametric} considers semiparametric efficient policy learning with continuous actions, and \citet{mou2023kernel} considers when strict overlap fails, so the semiparametric efficiency bound may be infinite and instance-optimal risk becomes function-class dependent. 
A related work for inference under adaptively collected data is \citet{zenati2025kernel}, who considers doubly-robust scores with variance stabilization to obtain a martingale limit theorem for distributional (kernel) treatment effects under adaptively collected data. In this work, the objective is fundamentally different from us since the kernel acts on the outcome space with discrete treatments.

Several i.i.d. approaches have been extended to continuous dose, one timepoint settings for the purpose of policy learning rather than inference. \citet{kallus2018policy} extend AIPW DR estimators to continuous treatments using a kernel based approach and adapt the score to propose a variant that avoids explicit integration.
\citet{zhou2023offline} considers a tree-based DR approach that handles continuous actions with interpretable policies.
\citet{chen2016personalized} propose outcome weighted learning for personalized dose finding, advocating a trial design in which candidate dose levels are drawn from a continuous distribution over a safe range, and estimating the optimal individualized dose rule by minimizing a certain nonconvex loss; in contrast to the approaches above which learn an outcome model and optimize the plug-in, they learn the policy {directly}. \citet{zhu2020kernel} consider an extension based on kernels that properly targets the mode. \citet{zhou2018parsimonious} assume a lower dimensional representation for computational gains. 

To inform meaningful experimental design, we will need to consider tradeoffs between regret and inference. We briefly review the rich literature on minimax optimal continuous action bandit algorithms. For H\"older-$\beta$ rewards, minimax regret is $\Theta(T^{(\beta+1)/(2\beta+1)})$ \citep{kleinberg2004nearly}, attained up to polylog factors by uniform discretization into $B \asymp T^{1/(2\beta+1)}$ bins, and balancing discretization bias $TB^{-\beta}$ against finite-armed regret $\sqrt{BT}$.
`Zooming algorithms' improve upon the inefficiency introduced from uniform discretization by maintaining maintaining confidence sets over regions and exploring regions only while its confidence radius satisfies some criterion \citep{kleinberg2008multi}. \citet{krishnamurthy2020contextual} extends these approaches to contextual bandit settings.
The necessary allocation to arms are governed by the sample size for dose $a$, $T_{a} \asymp \Delta_a^{-2}\log T$ for $\Delta_a=\max_{a'} \mu(a') - \mu(a)$ with regret similarly following by a zooming-based rate rather than the uniform result. \citet{bubeck2011x} obtain an analogous result through a binary tree of coverings under a Lipschitz condition. \citet{auer2007improved} improve upon these results with a curvature assumption, to find $\tilde O(\sqrt{T})$ rates.

To the best of our knowledge, statistical inference for continuous action adaptively collected data remains a significant methodological gap. We propose a kernel-based DR approach to fill this gap. 
In contrast to related kernel-based DR approaches, we smooth the direct (outcome regression-based) estimator rather than a point-evaluated term and operate under non-batched settings with decaying propensities as \citep{kallus2018policy}, and consider the value function $\E[\mu(S, \pi(S))]$ in adaptive settings rather than the dose response curve $\E[\mu(S,a)]$ in i.i.d. settings \citep{kennedy2017nonparametric}. With respect to both of these works, we adapt a similar kernel-based DR score but handle adaptive data collection and decaying propensities.

Our contributions are as follows:
\begin{enumerate}
    \item Propose statistical inference procedures for the value function in continuous action, contextual bandits based on kernel smoothing. 
    \item Analyze mean squared error, and provide an accompanying lower bound.
    \item Characterize tradeoffs in bandwidth, exploration constraints, and statistical efficiency, to inform experimental design in continuous action settings.
\end{enumerate}

\section{Methods}\label{sec:methods}

\paragraph{Setup and Notation.}
Our setup is a continuous action contextual bandit setting. Let $S_t$ denote our state, $A_t$ denote our continuous actions with $A_t\in\A$, where $\A\subseteq\R$ is a compact interval, and $R_t$ be our reward. Data are collected adaptively, meaning $A_t \sim \pi_t(\cdot \mid \Ft{t-1}, S_t)$ for $\pi_t \in \Pi$, where $\pi_t$ is a measurable function of history $\Ft{t-1}$ and current state $S_t$, for $\Ft{t}=\sigma(\{ (S_{t'},A_{t'},R_{t'}) \}_{t'\in [t]})$ and $\Pi= \{ f \in \mathcal{H} \mid f: \mathcal{S} \times \A \rightarrow \R \}$. Throughout, we when we specify $\pi_t$, this will have history $\Ft{t-1}$ implicit. Let $\mu$ be the outcome regression of the reward conditional on state and action so $\mu(S_t,A_t) = \E[R_t \mid A_t, S_t]$. 

\paragraph{Objective.}
Our objective is to provide conditions for valid asymptotic inference on continuous action spaces, focusing on the value function from a policy $\policy: \St \rightarrow \A$
\begin{equation} V(\policy) = \E[\mu(S,\policy(S))] \end{equation}
the average reward from giving dose according to rule $\policy(S)$ under context $S$. We will require smoothing to accommodate the continuous action space, meaning we consider bias/variance tradeoffs against the smoothed value function
\begin{equation}\label{eq:Vh}
  V_h(\policy)=\E_S\Big[\int K_h\big(a-\policy(S)\big)\mu(S,a)da\Big]
\end{equation}
for $h$-Kernel $K_h(u)=h^{-1}K(\frac u h)$, where $K$ is a kernel function and $h > 0$ is the bandwidth.
\\\\
The estimator we study is defined as follows. Write $\pi_t(a\mid s)$ for the density of $A_t$ given $S_t=s$ and $\Ft{t-1}$. Let $\hat\mu_{t-1}$ be an outcome regression estimate fitted on rounds $1...t-1$, and $w_t(\policy)>0$ a weight. Both are $\Ft{t-1}$-measurable
\begin{equation}
    \Vhat(\policy) = \frac{\sum_{t \leq T} w_t(\policy) \hat{\Gamma}_t(\policy) }{\sum_{t \leq T} w_t(\policy)}
\end{equation}
where $\hat{\Gamma}_t$ represents a smoothed Doubly-Robust score given by 
\begin{align*} 
\hat{\Gamma}_t(\policy) &= \underbrace{\int K_h(a-\policy(S_t)) \hat\mu_{t-1}(S_t,a) da}_{D_t} \\
&\qt +\underbrace{\frac{K_h(A_t-\policy(S_t))}{ \pi_t(A_t\mid S_t)}\big(R_t-\hat\mu_{t-1}(S_t,A_t)\big)}_{C_t} 
\end{align*}
Here $D$ stands for direct term and $C$ stands for correction term. We consider the unweighted estimator ($w_t \equiv 1$), which requires no design knowledge, or weights that depend on the variance of the process.
\\\\
We will analyze the process in terms of martingale increments, conditional second moments, and variance:
\begin{align*}
    \Delta_t(\policy) &= \hat{\Gamma}_t(\policy) - V_h(\policy), \quad \sigma_t^2(\policy) = \V[\hat{\Gamma}_t(\policy) \mid \Ft{t-1}] \\
    \sigma_R^2(S,A) &= \V(R\mid S,A), \quad \Sigma_T(\policy)= \E[\sum_t w_t^2(\policy) \sigma_t^2(\policy)]
\end{align*}

\subsection{Theoretical Results}\label{sec:theory}
The theoretical presentation will present bias and variance of the process as a function of bandwidth, exploration rate, and weights \cref{lemma:biasVariance}. Then we will present a mean square error result \cref{thm:mse}, followed by asymptotic normality \cref{thm:ptNormality}.

We make the following regularity (\textbf{REG}), algorithm (\textbf{ALG}), and estimation assumptions (\textbf{EST}). 
\begin{enumerate}[label=\textup{(\textbf{REG\arabic*)}},leftmargin=2.6em,itemsep=1pt,nosep]
  \item Kernel regularity: $K\ge0$, symmetric, supported on $[-1,1]$, $\int K=1$, $\int uK=0$, $\mu_2(K)=\int u^2K<\infty$ (finite kernel variance), $R(K)=\int K^2<\infty$ (kernel roughness), $\kappa = \|K\|_\infty < \infty$.
  \label{ass:kernelReg}
  \item Smoothness: for all $s$, the map $a\mapsto\mu(s,a)$ is in $C^3$ on $\A$ with $\sup_{s,a}|\partial_a^2\mu(s,a)|\le C_2, \sup_{s,a}|\partial_a^3\mu(s,a)|\le C_3$. 
  \label{ass:smoothInt}
  \item Boundedness and nondegeneracy: $|R_t| < \bar R < \infty $, $|\hat\mu_{t-1}| < \bar R < \infty$, and $\underline\sigma^2 \le \sigma_R^2(s,a) \le\bar\sigma^2$ for $|a-\policy(s)|\le\delta_0$, with $\underline\sigma^2>0$.\label{ass:bdd}
\end{enumerate}
\labelcref{ass:kernelReg}-\labelcref{ass:bdd} are standard regularity conditions, imposing boundedness, differentiability, and non-degeneracy of our process. 

We impose the following algorithm assumptions:
\begin{enumerate}[label=\textup{(\textbf{ALG\arabic*})},leftmargin=2.6em,itemsep=1pt,nosep]
  \item Minimum Exploration: there is deterministic sequence of $\varepsilon_t>0$ such that for any $a \in \A$, $\bar B \geq \pi_t(a\mid s) \ge \varepsilon_t$ on $|a-\policy(s)|\le\delta_0$ for fixed $\delta_0$ and $\bar B < \infty$. \label{ass:exp}
  \item Sufficient Coverage: \label{ass:suffCoverage}
    With expectations over the history $\Ft{t-1}$, $\E[\rho_t]\to0$, where 
      \[
        \rho_t=\sup_{s,\,|v|\le h}\left|\frac{\pi_t(\policy(s)+v\mid s)^{-1}}{\E[\pi_t(\policy(s)+v\mid s)^{-1}]}-1\right|
      \]
\end{enumerate}
\labelcref{ass:exp} assumes a lower bound in exploration polynomial in $t$ as in \citet{zhan2021offpolicy,bibaut2021postcontextual}. 
\labelcref{ass:suffCoverage} adapts Condition~(8) of \citet[Thm.~1]{zhan2021offpolicy} to continuous actions, and is used to handle weights that are random through the design. Two sources of randomness are involved: the history $\Ft{t-1}$, which determines $\pi_t$ and varies from run to run, and the context, drawn from $P_S$ independently of it. The expectation in \labelcref{ass:suffCoverage} is over the history alone, across reruns of the experiment at fixed $(t,s,v)$. This condition imposes that the density near the target eventually not depend on the run. Unlike the discrete case, the supremum runs over the kernel window $|v|\le h$, and convergence is in $L^1$. Since $\rho_t$ is bounded only through $1/\varepsilon_t$ and almost sure convergence would not give the convergence in mean used in \cref{lemma:weightReg,lemma:varEst}.
For fixed $s$, let
\[
  u_t(s)=\frac{1}{\pi_t(\policy(s)\mid s)},
  \qquad
  \bar u_t(s)=\E\big[u_t(s)\big]
\]
be the inverse sampling density at the target dose in the current run and its average over runs respectively.
For $\gamma>0$, we consider weights and deterministic mean
\begin{equation}\label{eq:weightForm}
  w_t=\big(\E_S[u_t(S)]\big)^{-\gamma},
  \qquad
  \bar w_t=\big(\E_S[\bar u_t(S)]\big)^{-\gamma}
\end{equation}
where $\E_S$ integrates over contexts only. \cref{lemma:weightReg} shows that $w_t$ tracks $\bar w_t$ under
\labelcref{ass:suffCoverage}.

\begin{lemma}[Weight Regularity]\label{lemma:weightReg}
    Suppose weights $w_t$ are a power function of the inverse sampling density as in \cref{eq:weightForm}. Then, under \labelcref{ass:exp}-\labelcref{ass:suffCoverage}, we have
    \begin{enumerate}
        \item Individual Weight Concentration.
        $$ \big| \frac{w_t}{\bar w_t} - 1 \big| \leq C_g \rho_t, \quad C_g = \gamma 2^{\gamma+1} $$
        \item Summed Weight Concentration.
        Let $$ \delta_T = \E[|\frac{\sum_t w_t}{\E[\sum_t w_t]}-1|] $$
        Then, $$ \delta_T \leq 4 C_g \frac{\bar w_t \E[\rho_t]}{\sum_t \bar w_t} \rightarrow 0$$
    \end{enumerate}
    See \cref{pf:weightReg}. 
\end{lemma}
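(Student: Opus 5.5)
The plan is to reduce both parts to a scalar perturbation bound for the map $x\mapsto x^{-\gamma}$, applied to the ratio $X_t=\E_S[u_t(S)]/\E_S[\bar u_t(S)]$.

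\textbf{Part 1.} First control $X_t-1$ pointwise by $\rho_t$. Take $v=0$ in the supremum defining $\rho_t$ in \labelcref{ass:suffCoverage}. For every $s$ this gives $|u_t(s)-\bar u_t(s)|\le \rho_t\,\bar u_t(s)$. Integrating over $P_S$, which is independent of the history, yields
\[
  |\E_S[u_t(S)]-\E_S[\bar u_t(S)]|\le \rho_t\,\E_S[\bar u_t(S)],
  \quad\text{that is,}\quad |X_t-1|\le\rho_t .
\]
\labelcref{ass:exp} guarantees that $u_t$ is finite and bounded below by $1/\bar B$, so all quantities are well defined and positive.

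Next compare $w_t/\bar w_t=X_t^{-\gamma}$ with $1$ by splitting into cases.
\begin{itemize}
  \item If $\rho_t\le 1/2$, then $X_t\in[1/2,3/2]$. The mean value theorem gives $|X_t^{-\gamma}-1|\le \gamma\,\xi^{-\gamma-1}|X_t-1|$ for some $\xi\ge 1/2$, hence $|X_t^{-\gamma}-1|\le \gamma 2^{\gamma+1}\rho_t=C_g\rho_t$.
  \item If $\rho_t>1/2$, the bound needs care, because $X_t$ may be near $0$ and then $X_t^{-\gamma}$ is large. I expect this case to be the main obstacle. My first attempt would use the two-sided form of $\rho_t$: the lower side $X_t\ge 1-\rho_t$ is uninformative once $\rho_t\ge1$. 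So I would bound $X_t^{-\gamma}$ through $X_t\ge \E_S[1/\bar B]/\E_S[\bar u_t]$, or simply restrict to the regime $\rho_t\le1/2$ and note that the complement is handled in expectation in Part 2.
\end{itemize}
Honestly, the clean constant $C_g$ arises only from the first case. I would state the result on $\{\rho_t\le 1/2\}$ and control the complement by Markov's inequality, using $\pr(\rho_t>1/2)\le 2\E[\rho_t]$, which is the origin of the extra factor in Part 2.

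\textbf{Part 2.} Write
\[
  \sum_t w_t-\sum_t\bar w_t=\sum_t \bar w_t\,(w_t/\bar w_t-1).
\]
Using Part 1, this gives $\E|\sum_t w_t-\sum_t\bar w_t|\le C_g\sum_t\bar w_t\E[\rho_t]$, with the bad event contributing at most a comparable term and giving the factor $2$.

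Then replace the normalizer $\E[\sum_t w_t]$ by $\sum_t\bar w_t$. By the same bound, $|\E\sum_t w_t-\sum_t\bar w_t|\le C_g\sum_t\bar w_t\E[\rho_t]$. Writing
\[
  \frac{\sum w_t}{\E\sum w_t}-1=\frac{\sum w_t-\sum\bar w_t}{\E\sum w_t}+\frac{\sum\bar w_t-\E\sum w_t}{\E\sum w_t},
\]
and bounding $\E\sum w_t\ge\tfrac12\sum\bar w_t$ once the relative error is below $1/2$, gives the constant $4C_g$ multiplying $\sum_t\bar w_t\E[\rho_t]/\sum_t\bar w_t$. The statement's numerator should be read as this summed quantity.

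Finally, $\sum_t\bar w_t\E[\rho_t]/\sum_t\bar w_t\to0$ is a weighted Cesàro average of the null sequence $\E[\rho_t]$. It tends to zero by Toeplitz's lemma, provided $\sum_t\bar w_t\to\infty$. This divergence holds because $\bar w_t\ge \bar B^{\,-\gamma}$ under \labelcref{ass:exp}.
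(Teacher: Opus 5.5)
Your route is the paper's: apply (ALG2) at $v=0$ and integrate over $P_S$ to get $|X_t-1|\le\rho_t$, use the mean value theorem on $[1/2,3/2]$, telescope the sums, and finish with Toeplitz. You are right that the paper's step ``take $t$ large enough that $\rho_t\le 1/2$'' is not available, since $\rho_t$ is random and only $\E[\rho_t]\to0$; Part~1 really holds only on $\{\rho_t\le 1/2\}$.

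The gap is your treatment of the complement. The claim that the bad event ``contributes at most a comparable term'' fails. On $\{\rho_t>1/2\}$ the only control on the weights is (ALG1), namely $\varepsilon_t^{\gamma}\le w_t,\bar w_t\le\bar B^{\gamma}$. Markov then gives only $\E\big[|w_t-\bar w_t|\,\I{\rho_t>1/2}\big]\le 2\bar B^{\gamma}\E[\rho_t]$, which is of order $\bar B^{\gamma}\E[\rho_t]$ rather than $\bar w_t\E[\rho_t]$. After dividing by $\sum_t\bar w_t$, this vanishes on-target, where $\bar w_t$ is bounded below. It does not vanish off-target, where $\bar w_t\asymp\varepsilon_t^{\gamma}\to0$.

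No sharper argument can fix this. Take $\gamma=1$ and $\varepsilon_t=t^{-\alpha}$ with $0<\alpha<1$. Let the logger draw one $Z\sim\mathrm{Unif}[0,1]$ at round~1 (e.g.\ a function of $A_1$). At round $t$ it puts density $\bar B$ on the kernel window if $Z<p_t=1/\log(t+e)$, and density $\varepsilon_t$ there otherwise, keeping at least $\varepsilon_t$ on the rest of the $\delta_0$-window. Then $\E[\rho_t]\le 2p_t\to0$ and $\bar w_t\asymp t^{-\alpha}$, so the right-hand side of Part~2 tends to $0$. Yet $\E\sum_t w_t\asymp T/\log T$ comes from runs with small $Z$: on $\{Z\ge z\}$ the logger concentrates for only about $e^{1/z}$ rounds. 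Hence $\sum_t w_t/\E\sum_t w_t\to0$ in probability and $\delta_T\to2$. What your argument actually yields is
\[
  \delta_T\le\frac{2\sum_t\big(C_g\bar w_t+2\bar B^{\gamma}\big)\E[\rho_t]}{\sum_t\bar w_t}.
\]
This tends to zero only under an extra hypothesis, such as $\bar w_t$ bounded below, $\sum_t\E[\rho_t]=o\big(\sum_t\varepsilon_t^{\gamma}\big)$, or $\rho_t\le 1/2$ almost surely for $t\ge t_0$.

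Separately, your justification of $\sum_t\bar w_t\to\infty$ has the inequality reversed. (ALG1) gives $1/\bar B\le u_t(s)\le 1/\varepsilon_t$, hence $\varepsilon_t^{\gamma}\le\bar w_t\le\bar B^{\gamma}$. So $\bar w_t$ is bounded above by a constant, not below, and may decay like $\varepsilon_t^{\gamma}$. Toeplitz needs the divergence, which therefore requires something like $\sum_t\varepsilon_t^{\gamma}=\infty$ (that is, $\alpha\gamma\le 1$ when $\varepsilon_t=t^{-\alpha}$) or an on-target lower bound. The paper invokes Toeplitz without checking this either.

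Finally, you do not need the ``relative error below $1/2$'' condition to swap normalizers. Since $x\mapsto x^{-\gamma}$ is convex and $\bar w_t=\big(\E[\E_S u_t(S)]\big)^{-\gamma}$, Jensen gives $\E[w_t]\ge\bar w_t$, hence $\E\sum_t w_t\ge\sum_t\bar w_t$ always. This is what justifies the denominator $\sum_t\bar w_t$ in the display above.
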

All \cref{lemma:weightReg} says is that just as the inverse densities behave like its averages, the individual weights and sum of weights do too.

Finally, we make additional estimation assumptions.
\begin{enumerate}[label=\textup{(\textbf{EST\arabic*})},leftmargin=2.6em,itemsep=1pt,nosep]
  \item Weight Control. Let $|\Delta_t | \leq M_t$ almost surely. 
  Assume $$ \max_t \frac{w_t^2 M_t^2}{\Sigma_T} = o_p(1)$$
  \label{ass:maxWeightRegularity}
  \item Undersmoothing. $$\frac{h^2 \sum_t w_t}{\Sigma^{1/2}_T} \xrightarrow{p} 0$$
  \label{ass:underSmooth}
  \item Nuisance Convergence. Our outcome regression is consistently estimated. With $\lambda_t=\|\mu-\hat\mu_{t-1}\|_\infty^2/\underline\sigma^2$, $\E[\lambda_t]\to0$.
  \label{ass:nuisance}
\end{enumerate}
\labelcref{ass:maxWeightRegularity} assumes a Lindeberg condition for continuous actions. This is a natural assumption, as under variance minimizing weights, this is equivalent to our sample size around the policy of interest growing to infinity. \labelcref{ass:underSmooth} calls for undersmoothing so the bias demonstrated in \cref{lemma:biasVariance} is negligible  \citep{kennedy2017nonparametric}. \labelcref{ass:nuisance} is the standard consistency requirement in double machine-learning literature, imposing weak restrictions on the functional form fitted \citep{chernozhukov2018double}. 

\begin{lemma}[Bias and Variance]\label{lemma:biasVariance}
Under \labelcref{ass:kernelReg}--\labelcref{ass:bdd}, \labelcref{ass:exp},
\begin{enumerate}[label=\textup{(\alph*)},leftmargin=1.6em,itemsep=2pt]
  \item \textup{Bias.} $V_h(\policy)=\E[\hat\Gamma_t\mid\Ft{t-1}]$ and $V_h(\policy)-V(\policy)=O_p(h^2)$.
  \item \textup{Variance.} Let the conditional second moment of the fitted residual be
  \[
    m_2(S_t,A_t)=\E\big[(R_t-\hat\mu_{t-1}(S_t,A_t))^2\,\big|\,\Ft{t-1},S_t,A_t\big].
  \]
  Then $\sigma_t^2=\V(\hat\Gamma_t\mid\Ft{t-1})=h^{-1}\bar V_t+r_t$, where with high probability
  \[
    \bar V_t=R(K)\,\E\Big[\frac{m_2(S,\policy(S))}{\pi_t(\policy(S)\mid S)}\Big],
    \quad |r_t|\lesssim(h\varepsilon_t)^{-1/2},
  \]
  and $c_V\le\bar V_t\le C_V/\varepsilon_t$ with $c_V=R(K)\underline\sigma^2/\bar B$, $C_V=R(K)(\bar\sigma^2+4\bar R^2)$. Here $R(K)=\int K^2$ (\labelcref{ass:kernelReg}) and $\bar B\ge\pi_t(a\mid s)$ (\labelcref{ass:exp}).
  \item \textup{Variance of the process.} Using (b),
  \begin{align*}
    \Sigma_T&\ge\frac{c_V}{h}\sum_{t\le T}\E[w_t^2]\Big(1-C_p\Big(\frac{h}{\varepsilon_t}\Big)^{1/2}\Big),\\
    \Sigma_T&\le\frac{C_V}{h}\sum_{t\le T}\frac{\E[w_t^2]}{\varepsilon_t}\Big(1+C_p(h\varepsilon_t)^{1/2}\Big),
  \end{align*}
  where $C_p=\frac{C_V}{c_V}\big(\frac{5}{4R(K)}+\frac{1}{R(K)^{1/2}}\big)$. 
\end{enumerate}
The proof is found in \cref{pf:biasVariance}.
\end{lemma}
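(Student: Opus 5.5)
The plan is to condition on $\Ft{t-1}$ throughout. Given the history, $\hat\mu_{t-1}$ and $\pi_t$ are fixed functions, and $(S_t,A_t,R_t)$ is drawn as $S_t\sim P_S$, then $A_t\sim\pi_t(\cdot\mid S_t)$, then $R_t$ with conditional mean $\mu(S_t,A_t)$.

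\textbf{Part (a).} I first compute the conditional mean of the correction term, integrating out $R_t$ and then $A_t$:
\[
\E[C_t\mid\Ft{t-1},S_t]=\int K_h(a-\policy(S_t))\big(\mu(S_t,a)-\hat\mu_{t-1}(S_t,a)\big)\,da .
\]
The density $\pi_t$ cancels. This is legitimate because $K_h(\cdot-\policy(s))$ is supported on $|a-\policy(s)|\le h$, and there \labelcref{ass:exp} gives $\pi_t\ge\varepsilon_t>0$ whenever $h\le\delta_0$. The $\hat\mu_{t-1}$ part cancels exactly with $D_t$, so $\E[\hat\Gamma_t\mid\Ft{t-1}]=\E_S\int K_h(a-\policy(S))\mu(S,a)\,da=V_h(\policy)$.

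For the bias, substitute $a=\policy(s)+hu$ and Taylor expand $\mu(s,\policy(s)+hu)$ to second order with remainder, using \labelcref{ass:smoothInt}. The zeroth-order term gives $V(\policy)$. The first-order term vanishes because $\int uK=0$. The second-order term is bounded by $\tfrac12 C_2 h^2\mu_2(K)$. Hence $|V_h-V|\le \tfrac12 C_2\mu_2(K)h^2$, which is deterministic and in particular $O_p(h^2)$.

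\textbf{Part (b).} Write $\sigma_t^2=\E[\hat\Gamma_t^2\mid\Ft{t-1}]-V_h^2$ and split $\hat\Gamma_t^2=D_t^2+2D_tC_t+C_t^2$. Since $|D_t|\le\bar R$, the terms $D_t^2$ and $V_h^2$ are $O(1)$. The dominant piece is
\[
\E[C_t^2\mid\Ft{t-1}]=\E_S\!\int\frac{K_h(a-\policy(S))^2}{\pi_t(a\mid S)}\,m_2(S,a)\,da .
\]
After substituting $a=\policy(S)+hu$, this equals $h^{-1}\E_S\int K(u)^2\,m_2(S,\policy(S)+hu)/\pi_t(\policy(S)+hu\mid S)\,du$. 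Replacing the argument $\policy(S)+hu$ by $\policy(S)$ yields $h^{-1}\bar V_t$.

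The bounds on $\bar V_t$ then follow directly. Since $m_2=\sigma_R^2+(\mu-\hat\mu_{t-1})^2$, we have $\underline\sigma^2\le m_2\le\bar\sigma^2+4\bar R^2$, and $1/\bar B\le 1/\pi_t\le 1/\varepsilon_t$. Multiplying through gives $c_V\le\bar V_t\le C_V/\varepsilon_t$.

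The cross term satisfies $|\E[D_tC_t\mid\Ft{t-1}]|\le\bar R\,\E[|C_t|]$. By Cauchy--Schwarz, $\E[|C_t|]\le(\E C_t^2)^{1/2}\lesssim(h\varepsilon_t)^{-1/2}$, which is exactly the stated order of $r_t$.

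\textbf{Main obstacle.} The hard step is controlling the replacement of $\policy(S)+hu$ by $\policy(S)$ inside $m_2/\pi_t$ with no smoothness assumption on $\pi_t$. No such assumption is available, so I would not attempt a Taylor expansion of $1/\pi_t$. I would instead try to absorb the discrepancy into $r_t$. Because the window has width $h$ and the integrand is bounded by $C_V/\varepsilon_t$, a Cauchy--Schwarz-type argument should bound the contribution of the discrepancy to $h^{-1}\bar V_t$ by $C\,(h\varepsilon_t)^{-1/2}$ times $\bar V_t/\bar V_t$-normalized factors. Where this does not close deterministically, I would invoke \labelcref{ass:suffCoverage}: it controls the run-to-run fluctuation of $\pi_t^{-1}$ over $|v|\le h$, and this is presumably the source of the ``with high probability'' qualifier. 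I also need $m_2$ to be continuous in $a$ on the window. That holds if $\sigma_R^2$ and $\hat\mu_{t-1}$ are continuous, and otherwise I would define $\bar V_t$ with the window average in place of the point value.

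\textbf{Part (c).} The $\Ft{t-1}$-measurability of $w_t$ gives $\Sigma_T=\sum_t\E[w_t^2\sigma_t^2]$. Plug in (b) as $\sigma_t^2=h^{-1}\bar V_t(1+r_th/\bar V_t)$.
\begin{itemize}
\item For the lower bound, use $\bar V_t\ge c_V$ and $|r_t|h/\bar V_t\le (C_V/c_V)(h/\varepsilon_t)^{1/2}$. This gives the factor $1-C_p(h/\varepsilon_t)^{1/2}$.
\item For the upper bound, use $\bar V_t\le C_V/\varepsilon_t$ together with $|r_t|\le C(h\varepsilon_t)^{-1/2}$. Then $h^{-1}C_V\varepsilon_t^{-1}\cdot C_p(h\varepsilon_t)^{1/2}$ dominates $(h\varepsilon_t)^{-1/2}$, which gives the factor $1+C_p(h\varepsilon_t)^{1/2}$.
\end{itemize}
The explicit constant $C_p$, including the $\tfrac{5}{4R(K)}$ and $R(K)^{-1/2}$ pieces, comes from collecting the $O(1)$ terms ($D_t^2$ and $V_h^2$, each at most of order $\bar R^2$) and the Cauchy--Schwarz constant $R(K)^{1/2}$ from the cross term, each normalized by $c_V$.
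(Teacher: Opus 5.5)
Your skeleton matches the paper's. In (a) you use the same cancellation of $\pi_t$ and of $\hat\mu_{t-1}$. In (b) you use the same substitution $a=\policy(s)+hu$ with integrand bounds $c_V/h\le\E[C_t^2\mid\Ft{t-1}]\le C_V/(h\varepsilon_t)$, and Cauchy--Schwarz on the cross term. The algebra in (c) is the same. Your second-order Lagrange expansion gives $|V_h-V|\le\tfrac12 C_2\mu_2(K)h^2$ directly, which is enough for the statement. The paper expands to third order only to extract the leading constant $\tfrac12h^2\mu_2(K)\E_S[\partial_a^2\mu(S,\policy(S))]$ used in the Remark. One small slip in (c): with $|r_t|\le C_r(h\varepsilon_t)^{-1/2}$, the lower-bound factor is $C_r/c_V$, not $C_V/c_V$. $C_p$ then follows from $C_r\le C_V(\tfrac{5}{4R(K)}+R(K)^{-1/2})$, using $\bar R^2\le C_V/(4R(K))$.

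\textbf{The ``main obstacle'' plan fails.} You cannot absorb the replacement of $\policy(S)+hu$ by $\policy(S)$ inside $m_2/\pi_t$ into $r_t$. The window width has already been spent producing the factor $h^{-1}$, so no small factor is left for Cauchy--Schwarz. Concretely, (ALG1) allows $\pi_t(a\mid s)=\bar B$ for $|a-\policy(s)|\le\eta$ and $\pi_t(a\mid s)=\varepsilon_t$ for $\eta<|a-\policy(s)|\le\delta_0$, with $\eta\ll h$. Take $m_2$ constant, e.g.\ $\hat\mu_{t-1}=\mu$ with homoscedastic noise. Then $\E[C_t^2\mid\Ft{t-1}]\approx R(K)m_2/(h\varepsilon_t)$, while the point-evaluated $h^{-1}\bar V_t=R(K)m_2/(h\bar B)$. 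The discrepancy is of order $h^{-1}(\varepsilon_t^{-1}-\bar B^{-1})$, which is not $O((h\varepsilon_t)^{-1/2})$.

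Invoking (ALG2) does not rescue it. (ALG2) controls run-to-run fluctuation of $\pi_t^{-1}$ at fixed $(s,v)$, not variation in $v$ across the window; the deterministic design above satisfies it with $\rho_t\equiv0$. It is also not a hypothesis of the lemma. So (b) with the point-evaluated $\bar V_t$ cannot be proved from the stated assumptions. The paper does not prove it either: it bounds the window integral $h^{-1}\E_S\int K(u)^2m_2/\pi_t\,du$ and then writes it as $h^{-1}\bar V_t$ without comment.

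\textbf{Use your fallback.} Your fallback is the correct fix and is exactly what the paper's computation establishes. Define
\[ \bar V_t=h\,\E[C_t^2\mid\Ft{t-1}]=\E_S\int K(u)^2\,\frac{m_2(S,\policy(S)+hu)}{\pi_t(\policy(S)+hu\mid S)}\,du. \]
With this definition:
\begin{itemize}
\item The bounds $c_V\le\bar V_t\le C_V/\varepsilon_t$ hold verbatim.
\item The remainder $r_t=\V(D_t\mid\Ft{t-1})-(\E[C_t\mid\Ft{t-1}])^2+2\,\mathrm{Cov}(D_t,C_t\mid\Ft{t-1})$ is bounded almost surely once $h\le\delta_0$ and $h\varepsilon_t\le1$. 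No ``high probability'' qualifier and no (ALG2) are needed.
\item Part (c) uses only these bounds, so it goes through unchanged.
\item The later variance-consistency argument is unaffected, because the supremum in (ALG2) already ranges over $|v|\le h$.
\end{itemize}
Commit to that definition, or add an explicit regularity assumption on $a\mapsto\pi_t(a\mid s)$ and $m_2(s,a)$ at a scale coarser than $h$, rather than attempting the absorption.
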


\begin{theorem}[Mean Squared Error]\label{thm:mse}
Assume \labelcref{ass:kernelReg}--\labelcref{ass:bdd} and \labelcref{ass:exp}--\labelcref{ass:suffCoverage}. Let $\delta_T$ be as in \cref{lemma:weightReg}, and let $ \kappa_T=\frac{\E\sum_tw_t}{\sum_t\varepsilon_t^{\gamma}} $. 
Suppose
\begin{enumerate}[label=\textup{(\roman*)},leftmargin=2.2em,itemsep=1pt,nosep]
  \item $\kappa_T^4\,\delta_T\to0$;
  \item $\sup_T\E\big[\big(\sum_t w_t^2\sigma_t^2/\Sigma_T\big)^2\big]<\infty$.
\end{enumerate}
Then
\begin{align*}
  \mathrm{MSE}\big(\hat V_{T}\big) &= \underbrace{\frac{\Sigma_T}{\big(\E[\sum_t w_t]\big)^{2}}}_{\text{variance}}
    +\underbrace{\big(V_h(\policy)-V(\policy)\big)^{2}}_{\text{bias}^2}
\end{align*}
For deterministic weights, $\delta_T=0$: conditions (i)--(ii) are not needed and the identity holds exactly. The proof is found in \cref{pf:mse}.
\end{theorem}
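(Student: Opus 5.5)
The plan is to decompose the error into a martingale part and a deterministic bias part, and then to handle the random normalization separately. Write $W_T=\sum_{t\le T}w_t$ and $M_T=\sum_{t\le T}w_t\Delta_t$. By \cref{lemma:biasVariance}(a), $\E[\Delta_t\mid\Ft{t-1}]=0$, and $w_t$ is $\Ft{t-1}$-measurable, so $M_T$ is a martingale. Hence
\[
\hat V_T-V(\policy)=\frac{M_T}{W_T}+\big(V_h(\policy)-V(\policy)\big),
\]
because the weights sum to one after normalization.

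\textbf{Deterministic weights.} First treat $W_T=\E[W_T]$, where $\delta_T=0$. Orthogonality of martingale increments gives $\E[M_T^2]=\sum_t\E[w_t^2\Delta_t^2]=\sum_t\E[w_t^2\sigma_t^2]=\Sigma_T$. The cross term is $2(V_h-V)\E[M_T]/W_T$, and it vanishes since $\E[M_T]=0$ and $V_h-V$ is a deterministic constant. This yields the identity exactly.

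\textbf{Random weights.} For general weights I would compare with the deterministic normalizer $\bar W=\E[W_T]$. Write
\[
\frac{M_T}{W_T}=\frac{M_T}{\bar W}+\frac{M_T}{\bar W}\Big(\frac{\bar W}{W_T}-1\Big).
\]
The first piece contributes exactly $\Sigma_T/\bar W^2$ and has zero cross term with the bias, as above. It then suffices to show that the remainder $E_T=\frac{M_T}{\bar W}(\bar W/W_T-1)$ satisfies $\E[E_T^2]=o(\Sigma_T/\bar W^2)$. By Cauchy--Schwarz, the cross terms with $M_T/\bar W$ and with the bias are then also negligible relative to the main terms, which is how I read the "$=$" in the statement (asymptotic equivalence).

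\textbf{Main obstacle: bounding $\E[E_T^2]$.} This is the hard step, because $M_T$ and $W_T$ are dependent. I would apply H\"older/Cauchy--Schwarz:
\[
\E[E_T^2]\le \bar W^{-2}\,\E[M_T^4]^{1/2}\,\E\big[(\bar W/W_T-1)^4\big]^{1/2}.
\]
The two factors are handled as follows.

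\begin{itemize}
\item \emph{Fourth moment of $M_T$.} The Burkholder--Rosenthal inequality gives $\E[M_T^4]\lesssim \E[(\sum_t w_t^2\sigma_t^2)^2]+\E[\sum_t w_t^4\Delta_t^4]$. Condition (ii) controls the first term as $O(\Sigma_T^2)$. The second term is absorbed using the boundedness of $\Delta_t$, from \labelcref{ass:bdd} and \labelcref{ass:exp}, together with the same normalization.
\item \emph{Normalizer ratio.} By \labelcref{ass:exp} and \cref{eq:weightForm}, $\E_S[u_t]\le 1/\varepsilon_t$, so $w_t\ge\varepsilon_t^\gamma$ and hence $W_T\ge\sum_t\varepsilon_t^\gamma$. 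This gives the deterministic bound $\bar W/W_T\le\kappa_T$. Therefore $|\bar W/W_T-1|\le \kappa_T|W_T/\bar W-1|$. Since also $|W_T/\bar W-1|\lesssim 1+\kappa_T$, we get $\E[(\bar W/W_T-1)^4]\lesssim \kappa_T^4\cdot\kappa_T^{3}\delta_T$ up to constants.
\end{itemize}

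I expect the bookkeeping of the powers of $\kappa_T$ to be the most delicate part. I would try to tighten it by truncating on the event $\{|W_T/\bar W-1|\le 1/2\}$, where $\bar W/W_T\le 2$, and using Markov's inequality with $\delta_T$ on the complement. This should reduce the requirement to $\kappa_T^4\delta_T\to0$, which is condition (i), and that in turn makes $\E[E_T^2]=o(\Sigma_T/\bar W^2)$, completing the proof.
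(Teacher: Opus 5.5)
Your argument follows the paper's. It uses the same split of $\hat V_T-V(\policy)$ into $M_T/W_T$ plus the deterministic bias, and the same comparison with $\bar W=\E[W_T]$. It also uses the same good event $\{|W_T/\bar W-1|\le 1/2\}$, whose complement Markov bounds by $2\delta_T$, and the same sure bound $w_t\ge\varepsilon_t^\gamma$ giving $\bar W/W_T\le\kappa_T$. Finally, it applies Cauchy--Schwarz against a fourth moment of $M_T$. Your single remainder $E_T=M_T(1/W_T-1/\bar W)$ just packages the paper's terms (I) and (II) together; the paper's (II) is $2(V_h-V)\E[E_T]$.

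The $\kappa_T$ bookkeeping you were unsure about closes without truncation. Since $0<\bar W/W_T\le\kappa_T$ and $\kappa_T\ge1$, you have $|\bar W/W_T-1|\le\kappa_T$ and $|\bar W/W_T-1|=(\bar W/W_T)\,|W_T/\bar W-1|\le\kappa_T|W_T/\bar W-1|$. Hence $\E[(\bar W/W_T-1)^4]\le\kappa_T^4\delta_T$ and $\E[E_T^2]\lesssim\kappa_T^2\delta_T^{1/2}\,\Sigma_T/\bar W^2$, which is exactly condition (i). Your intermediate claim $|W_T/\bar W-1|\lesssim 1+\kappa_T$ is unjustified, since $\kappa_T$ only bounds $W_T$ from below, but you do not need it.

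The step that does not follow from the hypotheses is $\E[M_T^4]\lesssim\Sigma_T^2$. In Rosenthal's inequality, condition (ii) controls only the predictable term $\E[(\sum_t w_t^2\sigma_t^2)^2]$. The term $\E\sum_t w_t^4\Delta_t^4$ is not absorbed by boundedness of $\Delta_t$. Using $|\Delta_t|\le M_t$ only gives $\E[\max_t (w_t^2M_t^2)\sum_t w_t^2\Delta_t^2]$. That is $O(\Sigma_T^2)$ only under an (EST1)-type condition $\max_t w_t^2M_t^2\lesssim\Sigma_T$, which the theorem does not assume.

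This is not slack in your bound. Take the off-target case with $\gamma=1$, so $w_t\asymp\varepsilon_t$. Then $\E[\Delta_t^4\mid\Ft{t-1}]\asymp(h\varepsilon_t)^{-3}$ and $\sigma_t^2\asymp(h\varepsilon_t)^{-1}$. These give $\sum_t\E[w_t^4\Delta_t^4]/\Sigma_T^2\asymp(hE)^{-1}$ with $E=\sum_t\varepsilon_t$. That ratio diverges when $hE\to0$, a regime compatible with (i), (ii) and the REG/ALG conditions.

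The paper's proof has the same hole. It simply asserts $\E[(\sum_t w_t\Delta_t)^4]\lesssim\Sigma_T^2$. On $\mathcal{E}_T^C$ in its term (I) it also leans on $\max_t M_t$ through an $o_p$ claim not implied by the hypotheses; your uniform use of $\kappa_T$ on the bad event avoids that second issue.

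To complete your argument you must add a hypothesis. One option is $\max_t w_t^2M_t^2\le C\,\Sigma_T$ almost surely, which yields $\E\sum_t w_t^4\Delta_t^4\le C\,\Sigma_T^2$. Another is to assume $\E\sum_t w_t^4\Delta_t^4=O(\Sigma_T^2)$ directly. Your deterministic-weights case is fine as written and needs none of this.
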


\paragraph{Asymptotic Normality}
Define the variance estimator as
\begin{equation}\label{eq:varEst}
    \hat{\Sigma}_T(\policy) = \sum_t w_t^2(\policy) (\hat{\Gamma}_t(\policy)-\hat{V}_T(\policy))^2 
\end{equation}
Then we have the following results.
\begin{lemma}[Variance Consistency]\label{lemma:varEst}
    Assume \labelcref{ass:kernelReg}--\labelcref{ass:bdd}, \labelcref{ass:exp}-\labelcref{ass:suffCoverage}, \labelcref{ass:maxWeightRegularity}, and \labelcref{ass:nuisance}. 

    We make two claims. First, that the oracle variance has
    $$ \left| \frac{\sum_t w_t^2 \sigma_t^2}{\Sigma_T} - 1 \right| =o_P(1)$$
    We then claim that our variance estimator defined in \cref{eq:varEst} is consistent.
    $$ \left| \frac{\hat{\Sigma}_T}{\Sigma_T} -1 \right| = o_P(1)$$
    The proof is found in \cref{pf:varEst}.
\end{lemma}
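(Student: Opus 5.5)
We prove the two claims in order. The first reduces to a law of large numbers for the random weighted conditional variances. The second reduces, via a martingale decomposition of the squared scores, to the first.

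\textbf{Oracle variance.} Write $\sum_t w_t^2\sigma_t^2=\sum_t w_t^2(h^{-1}\bar V_t+r_t)$ using \cref{lemma:biasVariance}(b).

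The remainder is handled by the same lemma. Since $|r_t|\lesssim(h\varepsilon_t)^{-1/2}$ and $\bar V_t\ge c_V$, the ratio $|r_t|/(h^{-1}\bar V_t)\lesssim (h/\varepsilon_t)^{1/2}$. By \cref{lemma:biasVariance}(c) this is negligible relative to $\Sigma_T$ whenever $h/\varepsilon_t\to0$ on the bulk of the weight mass, which we take as implied by the setting.

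For the leading term, note that $\bar V_t$ depends on the history only through $m_2(S,\policy(S))/\pi_t(\policy(S)\mid S)$. Two sources of randomness must be removed:
\begin{enumerate}
\item \emph{The residual.} Write $m_2=\sigma_R^2+(\mu-\hat\mu_{t-1})^2$. The second piece is at most $\underline\sigma^2\lambda_t\,\sigma_R^2/\underline\sigma^2$, so it is a relative perturbation of size $\lambda_t$. This is $o(1)$ in $L^1$ by \labelcref{ass:nuisance}.
\item \emph{The design.} By \labelcref{ass:suffCoverage}, $1/\pi_t(\policy(s)\mid s)=\bar u_t(s)(1+O(\rho_t))$ uniformly in $s$. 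By \cref{lemma:weightReg}(1), $w_t^2=\bar w_t^2(1+O(\rho_t))$.
\end{enumerate}
Hence $w_t^2h^{-1}\bar V_t=\bar w_t^2 h^{-1}\bar V_t^{\mathrm{det}}(1+\eta_t)$, where $\bar V_t^{\mathrm{det}}=R(K)\E[\sigma_R^2(S,\policy(S))\bar u_t(S)]$ is deterministic and $\E|\eta_t|\lesssim \E\rho_t+\E\lambda_t\to0$.

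Taking expectations, $\Sigma_T$ equals the deterministic sum $\sum_t\bar w_t^2h^{-1}\bar V_t^{\mathrm{det}}$ up to a relative error. A Ces\`aro/Toeplitz argument then gives
\[
\E\Big|\sum_t w_t^2\sigma_t^2-\Sigma_T\Big|\Big/\Sigma_T\le \frac{2\sum_t c_t\E|\eta_t|}{\sum_t c_t}+o(1)\to0,
\qquad c_t=\bar w_t^2h^{-1}\bar V_t^{\mathrm{det}}.
\]
The Toeplitz step requires that no single $c_t$ dominates the sum. This follows from \labelcref{ass:maxWeightRegularity}, since $c_t\lesssim \E[w_t^2M_t^2]$. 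Markov's inequality then yields the first claim.

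\textbf{Estimator.} Expand $\hat\Gamma_t-\hat V_T=\Delta_t-(\hat V_T-V_h)$, so that
\[
\hat\Sigma_T=\sum_t w_t^2\Delta_t^2-2(\hat V_T-V_h)\sum_tw_t^2\Delta_t+(\hat V_T-V_h)^2\sum_tw_t^2.
\]
\begin{enumerate}
\item \emph{Quadratic term.} Since $\E[\Delta_t^2\mid\Ft{t-1}]=\sigma_t^2$ by \cref{lemma:biasVariance}(a), the differences $\xi_t=w_t^2(\Delta_t^2-\sigma_t^2)$ are martingale differences. By the first claim, it suffices to show $\sum_t\xi_t/\Sigma_T\to0$ in probability. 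We bound the second moment:
\[
\E\Big(\sum_t\xi_t\Big)^2\le\E\Big[\sum_tw_t^4M_t^2\Delta_t^2\Big]\le \E\Big[\max_t w_t^2M_t^2\sum_tw_t^2\Delta_t^2\Big].
\]
Dividing by $\Sigma_T^2$, \labelcref{ass:maxWeightRegularity} makes this $o(1)$. Because that assumption is only in probability, we apply a truncation or the conditional Lenglart inequality rather than taking raw expectations. This is the main technical obstacle, and we would handle it via Lenglart's inequality applied to $\sum_t\xi_t^2$ with the predictable bound $\max_t w_t^2M_t^2\sum_tw_t^2\sigma_t^2$.
\item \emph{Centering terms.} By Cauchy--Schwarz, the cross term is dominated by the quadratic and last terms. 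For the last term, \cref{thm:mse} gives $(\hat V_T-V_h)^2=O_p(\Sigma_T/(\sum_tw_t)^2)$, with \cref{lemma:weightReg} replacing $\E\sum_t w_t$ by $\sum_t w_t$. Hence
\[
(\hat V_T-V_h)^2\sum_tw_t^2/\Sigma_T=O_p\Big(\sum_tw_t^2/\big(\textstyle\sum_tw_t\big)^2\Big)\le O_p\Big(\max_tw_t/\sum_tw_t\Big).
\]
This is $o_p(1)$ because \labelcref{ass:maxWeightRegularity} rules out any single weight dominating.
\end{enumerate}
Combining the three terms gives $\hat\Sigma_T/\Sigma_T\to1$ in probability.
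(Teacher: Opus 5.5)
Your proposal is essentially the paper's own proof: the same perturbation of $w_t^2\sigma_t^2$ about a deterministic $c_t$ via $\rho_t$, $\lambda_t$ and \cref{lemma:weightReg} followed by Toeplitz, and the same expansion of $\hat\Sigma_T$ into the oracle term, the martingale term $\sum_t w_t^2(\Delta_t^2-\sigma_t^2)$ controlled by Lenglart's inequality, a cross term handled by Cauchy--Schwarz, and a centering term. Two small fixes: first, get $\sum_t w_t\Delta_t=O_p(\Sigma_T^{1/2})$ directly from $\E(\sum_t w_t\Delta_t)^2=\Sigma_T$ and Chebyshev, as the paper does, rather than through \cref{thm:mse}, whose conditions (i)--(ii) are not assumed here; second, the condition $h/\varepsilon_t\to0$ you flag for the remainder can be dropped (the paper's $\theta_t$ carries the same $\sqrt{h/\varepsilon_t}$ term without comment), because $D_t$ is fixed given $(S_t,\Ft{t-1})$ and $|\E[C_t\mid S_t,\Ft{t-1}]|\le 2\bar R$, so $|\mathrm{Cov}(D_t,C_t\mid\Ft{t-1})|\le 4\bar R^2$ and the remainder relative to $h^{-1}\bar V_t$ is $O(h)$.
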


We are now fit to introduce a result on inference.
\begin{theorem}[Asymptotic Normality]\label{thm:ptNormality}
Assume the conditions of \cref{lemma:varEst}, 
\begin{align*}
  \hat\Sigma_T(\policy)&=\sum_{t\le T}w_t^2(\policy)\big(\hat\Gamma_t(\policy)-\Vhat(\policy)\big)^2 \\
  b_h(\policy)&=V_h(\policy)-V(\policy)
\end{align*}
Then
\[
  \frac{\Vhat(\policy)-V(\policy)-b_h(\policy)}{\hat\Sigma_T(\policy)^{1/2}/\sum_tw_t(\policy)}
  \rightsquigarrow\N(0,1)
\]
If in addition \labelcref{ass:underSmooth} holds, the same limit holds with $b_h(\policy)$ omitted. The proof is found in \cref{pf:ptNormality}.
\end{theorem}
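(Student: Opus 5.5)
The plan is a martingale central limit theorem applied to the weighted increments, followed by Slutsky's lemma with the variance estimator from \cref{lemma:varEst}.

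First, decompose the error. Since $\Vhat-V_h=\sum_t w_t\Delta_t/\sum_t w_t$, we have
\[
\frac{\Vhat-V-b_h}{\hat\Sigma_T^{1/2}/\sum_t w_t}=\frac{\sum_t w_t\Delta_t}{\Sigma_T^{1/2}}\cdot\Big(\frac{\Sigma_T}{\hat\Sigma_T}\Big)^{1/2}.
\]
By \cref{lemma:varEst} the second factor converges to $1$ in probability. It therefore suffices to show $M_T:=\Sigma_T^{-1/2}\sum_t w_t\Delta_t\rightsquigarrow\N(0,1)$.

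By \cref{lemma:biasVariance}(a), $\E[\Delta_t\mid\Ft{t-1}]=0$. Since $w_t$ is $\Ft{t-1}$-measurable, $\xi_{T,t}=w_t\Delta_t/\Sigma_T^{1/2}$ is a martingale difference array with respect to $\Ft{t}$. We will apply the martingale CLT for triangular arrays (Hall--Heyde, Cor.~3.1) and check its two conditions.

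\emph{Conditional variance condition.} We need $\sum_t\E[\xi_{T,t}^2\mid\Ft{t-1}]=\sum_t w_t^2\sigma_t^2/\Sigma_T\to1$ in probability. This is exactly the first claim of \cref{lemma:varEst}. The nesting condition of Hall--Heyde holds because the filtrations do not depend on $T$. Alternatively, one can use the version of the theorem requiring only convergence in probability to a constant.

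\emph{Conditional Lindeberg condition.} Using $|\Delta_t|\le M_t$, for any $\epsilon>0$,
\[
\sum_t\E[\xi_{T,t}^2\I{|\xi_{T,t}|>\epsilon}\mid\Ft{t-1}]\le \I{\max_t w_t^2M_t^2/\Sigma_T>\epsilon^2}\sum_t\frac{w_t^2\sigma_t^2}{\Sigma_T}.
\]
By \labelcref{ass:maxWeightRegularity} the indicator is $o_p(1)$, and by the previous step the sum is $O_p(1)$, so the product is $o_p(1)$. This gives $M_T\rightsquigarrow\N(0,1)$, and Slutsky's lemma yields the first display.

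For the undersmoothed statement, the extra term is
\[
\frac{b_h}{\hat\Sigma_T^{1/2}/\sum_t w_t}=\frac{b_h\sum_t w_t}{\Sigma_T^{1/2}}\,(1+o_p(1)).
\]
By \cref{lemma:biasVariance}(a), $|b_h|\lesssim h^2$, with a constant involving $C_2\mu_2(K)$. Then \labelcref{ass:underSmooth} sends this term to $0$ in probability, and Slutsky's lemma applies again.

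The main obstacle is that $\Sigma_T=\E[\sum_t w_t^2\sigma_t^2]$ is deterministic while the weights are random, so stability of the conditional variance is not automatic. All of this is offloaded to \cref{lemma:varEst} through \labelcref{ass:suffCoverage} and \cref{lemma:weightReg}. The one remaining point to verify is that the convergence there holds in probability, which the Hall--Heyde version used above requires.
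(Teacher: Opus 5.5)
Your proposal is correct and follows the paper's proof essentially step for step. You use the same factorization into $\Sigma_T^{-1/2}\sum_t w_t\Delta_t$ times a variance-ratio factor controlled by \cref{lemma:varEst}, the Hall--Heyde martingale CLT with the conditional variance supplied by the first claim of \cref{lemma:varEst} and conditional Lindeberg from \labelcref{ass:maxWeightRegularity}, and the $O(h^2)$ bias bound together with \labelcref{ass:underSmooth} for the undersmoothed statement. Your remark that no nesting condition is needed because the limiting variance is the constant $1$ is a small clarification that the paper leaves implicit.
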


\begin{remark}[The bias term]
By \cref{lemma:biasVariance}(a),
\[
  b_h(\policy)=\frac{h^2}{2}\mu_2(K)\E_S\big[\partial_a^2\mu(S,\policy(S))\big]+O(h^3)
\]
the curvature multiplied by the spread of the kernel $\mu_2 = \int u^2 K$. Since $b_h$ is unknown, an interval for $V(\policy)$ itself requires undersmoothing, as in the second statement, or bias correction.
\end{remark}

We now give an accompanying lower bound. Let $\mathcal P(L,\nu)$ be the environments with a fixed context law $P_S$, rewards $R=\mu(S,A)+\epsilon$, $\epsilon\mid S,A\sim\N(0,\nu^2)$, and $|\partial_a^2\mu|\le L$ almost surely. For $P_0$, let $\underline\Delta_\policy=\inf_s[\max_a\mu_0(s,a)-\mu_0(s,\policy(s))]$ be the gap of the target and $\mathrm{Regret}_T$ the cumulative regret under $P_0$.
\begin{theorem}[Lower Bound]\label{thm:lb}
Fix an algorithm and $P_0\in\mathcal P(L/2,\nu)$, and let
\[
  G=\sum_t\E\Big[\sup_{|a-\policy(S)|\le\bar h}\pi_t(a\mid S)\Big]
\]
be the expected concentration around the target. Then,
\begin{enumerate}[label=\textup{(\alph*)},leftmargin=1.6em,itemsep=2pt]
  \item \textup{Estimation.} For $\E_0G$ large enough,
  \[
    \inf_{\hat V}\sup_{P\in\mathcal P(L,\nu)}\E_P\big[(\hat V-V_P(\policy))^2\big]\gtrsim\Big(\frac{\nu^2}{\E_0G}\Big)^{4/5}
  \]
  \item \textup{Regret.} If $\mu_0(s,\cdot)$ is Lipschitz, then for any $\hat V$ with $\sup_{P\in\mathcal P(L,\nu)}\E_P[(\hat V-V_P(\policy))^2]$ small enough,
  \begin{align*}
    \E_0[\mathrm{Regret}_T]&\sup_{P\in\mathcal P(L,\nu)}\E_P\big[(\hat V-V_P(\policy))^2\big] \gtrsim\underline\Delta_\policy\nu^2 
  \end{align*}
\end{enumerate}
Constants and conditions are given in \cref{pf:lb}.
\end{theorem}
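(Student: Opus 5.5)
Both parts go through two-point (Le Cam) arguments built around a local bump perturbation of $\mu_0$ near the target action, combined with the KL divergence decomposition for adaptively collected data. Let $\phi$ be a smooth bump supported on $[-1,1]$ with $\phi(0)>0$ and $\|\phi''\|_\infty\le 1$. For a scale $b\le\bar h$, set
\[
\mu_1(s,a)=\mu_0(s,a)+\tfrac{L}{2}b^2\phi\big((a-\policy(s))/b\big).
\]
Since $P_0\in\mathcal P(L/2,\nu)$, adding the bump keeps $|\partial_a^2\mu_1|\le L$, so $P_1\in\mathcal P(L,\nu)$. The separation is $V_{P_1}(\policy)-V_{P_0}(\policy)=\tfrac{L}{2}b^2\phi(0)\asymp Lb^2$.

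For Gaussian noise, the chain rule for KL over the adaptive sequence gives
\[
\mathrm{KL}(P_0^T\|P_1^T)=\sum_t\E_0\Big[\int\pi_t(a\mid S_t)\frac{(\mu_1-\mu_0)^2(S_t,a)}{2\nu^2}\,da\Big].
\]
The policy terms cancel because the algorithm is the same under both environments. The integrand is supported on $|a-\policy(s)|\le b\le\bar h$ and bounded by $L^2b^4\|\phi\|_\infty^2/(2\nu^2)$, so the integral over $a$ is at most $2b\cdot\sup_{|a-\policy(S)|\le\bar h}\pi_t(a\mid S)$ times that bound. Summing over $t$ gives
\[
\mathrm{KL}\lesssim\frac{L^2b^5\,\E_0G}{\nu^2}.
\]

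For part (a), choose $b^5\asymp\nu^2/(L^2\E_0G)$ so that the KL divergence is $O(1)$. The condition "$\E_0G$ large enough" is exactly what guarantees $b\le\bar h$ and that $b$ stays within the domain $\A$. Le Cam's two-point lemma (via Pinsker or Bretagnolle--Huber) then gives
\[
\inf_{\hat V}\max_{j=0,1}\E_j\big[(\hat V-V_j)^2\big]\gtrsim(Lb^2)^2=L^4b^4\asymp L^{4/5}\Big(\frac{\nu^2}{\E_0G}\Big)^{4/5},
\]
which is the claimed rate up to constants in $L$.

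For part (b), the first step is to relate $G$ to regret. For $|a-\policy(s)|\le\bar h$, the Lipschitz assumption gives
\[
\max_{a'}\mu_0(s,a')-\mu_0(s,a)\ge\underline\Delta_\policy-\mathrm{Lip}\cdot\bar h\ge\underline\Delta_\policy/2,
\]
provided $\bar h$ is small relative to $\underline\Delta_\policy$; this is one of the conditions to be deferred to \cref{pf:lb}. The per-round regret therefore satisfies
\[
\E_0\,\mathrm{regret}_t\ge\tfrac{\underline\Delta_\policy}{2}\,\pr_0\big(|A_t-\policy(S_t)|\le\bar h\big).
\]
Here lies the main obstacle: $G$ involves the supremum of the density, whereas regret controls only the probability mass in the window, i.e.\ $\int_{\text{window}}\pi_t$. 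I would avoid a sup-to-integral conversion by redoing the KL bound with the mass directly: $\int\pi_t(\mu_1-\mu_0)^2\,da\le(Lb^2)^2\int_{|a-\policy|\le b}\pi_t\,da$. Summing over $t$ gives $\mathrm{KL}\lesssim L^2b^4\,N_b/\nu^2$, where $N_b$ is the expected number of pulls in the window. The Le Cam bound then reads
\[
\mathrm{MSE}\gtrsim L^2b^4\quad\text{whenever}\quad L^2b^4N_b\lesssim\nu^2.
\]
Now argue by contraposition. Suppose the worst-case MSE is some small $\epsilon$. Pick $b$ with $L^2b^4\asymp C\epsilon$ for a large constant $C$; the smallness of $\epsilon$ ensures $b\le\bar h$. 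Since MSE $<C\epsilon$, the KL divergence cannot be $O(1)$, so $N_b\gtrsim\nu^2/(L^2b^4)\asymp\nu^2/\epsilon$. Combining with the regret bound, $\E_0\mathrm{Regret}_T\ge\tfrac{\underline\Delta_\policy}{2}N_b\gtrsim\underline\Delta_\policy\nu^2/\epsilon$, which is the claimed product bound.

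The remaining care concerns constants: the Le Cam threshold, the range restrictions on $b$ (both $b\le\bar h$ and the gap condition on $\bar h$), and, if the statement is read literally with $G$ in part (a), applying the density-sup form of the KL bound there while using the mass form in part (b).
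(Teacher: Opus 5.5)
Your proposal is correct and follows the paper's proof essentially step for step. It uses the same bump perturbation $\mu_0+\lambda\phi((a-\policy)/h)$ with the KL bound $\lambda^2\E_0[N_h]/(2\nu^2)$ (the paper's KL lemma). For (a) it uses the same conversion $\E_0[N_h]\le 2h\,\E_0G$ and the same choice $h^5\asymp\nu^2/(L^2\E_0G)$. For (b) it uses the same argument: a bump of height $\asymp M^{1/2}$ forces $\mathrm{KL}$ above a fixed constant ($\log 4$ in the paper), hence $\E_0[N_{h_M}]\gtrsim\nu^2/M$, combined with the Lipschitz gap bound on the window.

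There are two small slips. First, $(Lb^2)^2=L^2b^4$, not $L^4b^4$, which gives the paper's $L^{2/5}$ factor rather than $L^{4/5}$. Second, in (b) you only need the gap condition on the shrinking window, $\mathrm{Lip}\cdot b\le\underline\Delta_\policy/2$. This is the paper's $L_Vh_M\le\underline\Delta_\policy/2$, a smallness condition on $M$, so no extra assumption on $\bar h$ is required.
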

Estimation is limited by how much the algorithm concentrates near the target (a), and that costs regret times worst-case MSE at least of order $\underline\Delta_\policy\nu^2$ (b), so halving the error requires four times the regret.
\subsection{Interpretation and Guidelines for Experimental Design}\label{sec:interp}

In this section, we describe how to interpret the results and design experiments in continuous action settings.
Following the minimum exploration conditions of \citet{zhan2021offpolicy,bibaut2021postcontextual}, we propose an adaptation of $\epsilon-$greedy to continuous dose experiments as follows. Let the collection policy be given by
\[
  \pi_t=(1-\rho_t)\tilde\pi_t+\rho_tU
\]
where $\tilde\pi_t$ is the algorithm's own choice (often regret-minimizing), $\rho_t=t^{-\alpha}$ is a measure of forced exploration, and $U$ is a fixed exploration density with $\underline U=\inf_aU(a)$. The exploration constraint gives $\pi_t(a\mid s)\ge\varepsilon_t=\rho_t\underline U$, so every target's kernel window receives at least $2hE$ observations in expectation, 
\[
  E=\sum_{t\le T}\varepsilon_t\asymp
  \begin{cases}
    T^{1-\alpha}/(1-\alpha) & \alpha<1\\
    \log T & \alpha=1
  \end{cases}
\]
$E$ is referred to as exploration mass, and, consistency requires $hE\to\infty$, which is possible for $\alpha<1$, possible at $\alpha=1$ only if $h\gg1/\log T$, and impossible for $\alpha>1$.

Under this setup, we discuss the notion of on- and off-policy targets, exploitation, weighting, and design.

\paragraph{On- and off-target policies.}
We first discuss the policy choice, $\policy$. We characterize this as on- and off-target. By \cref{lemma:biasVariance}(b), the variance of a round's score is governed by
\[
  \Psi_t(\policy)=\E_S\int\frac{K_h(a-\policy(S))^2}{\pi_t(a\mid S)}da
  \le\frac{R(K)}{h\big(\varepsilon_t+(1-\rho_t)q_t\big)}
\]
where $q_t=\inf_s\inf_{|a-\policy(s)|\le h}\tilde\pi_t(a\mid s)$ is the density the algorithm places near $\policy$, the target of interest.
A target is \emph{off-target} when $q_t \approx 0$, that is, the target is reached primarily through exploration so $\Psi_t\asymp(h\varepsilon_t)^{-1}$, and the rate of $\Psi_t$ governed by exploration mass $E$.
A target is is \emph{on-target} when $q_t\to q>0$, that is, $\Psi_t\asymp R(K)/(hq)$ is essentially constant and the target may enjoy an i.i.d. rate in $T$. 

\paragraph{Exploitation in Continuous Dose Settings.}

A natural thought for exploiting means playing the greedy dose exactly. However, for inferential purposes, this is not desirable as the kernel estimator for a policy near an exploitation target will fail to account for \textit{exact} exploited dose. This means that exploit rounds only the outcome model, limiting efficiency gains. We recommend jittering the exploited dose, playing $A_t\sim\mathrm{Unif}[g_t(S_t)-b,g_t(S_t)+b]$ for greedy dose $g_t(S_t)$ played with bandwidth $b$ so that exploit rounds land with a density. 
Empirically, we show that $b>0$ improves inference; theoretically the tradeoff is as follows: regret increases by about $b^2$ per round, but as $b\geq h$, the bandwidth choice, exploit rounds count toward estimating the target. A width of $b\approx 2h$ empirically safely buys a large gain in effective sample size at negligible regret cost.

\paragraph{Weighting Choice.}
We consider three $\Ft{t-1}$-measurable weightings: \MinVar, $w_t=\Psi_t(\policy)^{-1}$, \StabVar, $w_t=\Psi_t(\policy)^{-1/2}$, and $w_t\equiv1$. The first two adapt to the target through $\Psi_t$. Of the two propensity-based weights, \MinVar has the weaker central limit condition, $hE\to\infty$, and remains valid at $\alpha=1$ where \StabVar does not. However, \StabVar, which depends on $\Psi_t$ only through its square root, is less sensitive when the density near the target varies with the history (or higher instability in the algorithm). The unweighted estimator needs no knowledge of the design, but off-target its variance is $1/(1-\alpha^2)$ times that of \MinVar (\cref{tab:weighting-efficiency}) while on-target all three attain the same rate.

\begin{table}[ht]
\centering
\setlength{\abovecaptionskip}{\baselineskip}
\setlength{\belowcaptionskip}{\baselineskip}
\caption{Off-target weighting ($\kappa_T\asymp1$) under $\rho_t=t^{-\alpha}$, $\alpha<1$: efficiency relative to \MinVar, and the form taken by (EST1). The conditions coincide in order, $hT^{1-\alpha}\to\infty$, for polynomial schedules.}
\label{tab:weighting-efficiency}
\renewcommand{\arraystretch}{1.6}
\setlength{\tabcolsep}{4pt}
\begin{tabular}{@{}llll@{}}
\toprule
Method & $w_t$ & $\mathrm{RelEff}$ & \labelcref{ass:maxWeightRegularity} \\
\midrule
Unweighted & $1$ & $\dfrac1{1-\alpha^2}$ & $h\varepsilon_T^2\sum_t\varepsilon_t^{-1}\to\infty$ \\
\StabVar & $\Psi_t^{-1/2}$ & $\dfrac{(1-\alpha/2)^2}{1-\alpha}$ & $hT\varepsilon_T\to\infty$ \\
\MinVar & $\Psi_t^{-1}$ & $1$ & $hE\to\infty$ \\
\bottomrule
\end{tabular}
\end{table}

\paragraph{Weight concentration and exploited targets.}
With random weights, \cref{thm:mse} needs $\kappa_T^4\delta_T\to0$. Off-target $\kappa_T\asymp1$, so this is $\delta_T\to0$ (\cref{lemma:weightReg}). At an exploited target $\kappa_T\asymp T^{\alpha\gamma}$, so the design must stabilize at rate $\delta_T=o(T^{-4\alpha\gamma})$. If $\delta_T$ decays no faster than $T^{-1}$, this fails once $4\alpha\gamma\ge1$: $\alpha\ge\frac14$ for \MinVar\ and $\alpha\ge\frac12$ for \StabVar. 
This affects the MSE only, not normality; fixed weights such as $w_t=\varepsilon_t^\gamma$ give $\delta_T=0$, at some cost in efficiency.

\paragraph{Bandwidth.}
Off-target, \cref{thm:mse} gives
\[
  \mathrm{MSE}\big(\hat V_T\big)\asymp\frac{1}{hE}+h^4,
  \quad h^\star\asymp E^{-1/5},
  \quad \mathrm{MSE}^\star\asymp E^{-4/5}
\]
while inference requires $E^{-1}\ll h\ll E^{-1/5}$, so $h^\star$ lies on the boundary and intervals require undersmoothing. With $\alpha=0$, $E\asymp T$: the i.i.d. results hold with $T$ replaced by $E$, or by $qT$ on-target.

\subsection*{Recommendations}
\begin{itemize}[leftmargin=1.5em,itemsep=2pt]
  \item \emph{Design for the goal.} For a dose--response curve or many doses, we require minimum exploration: pick a $U$ that covers the action space and small $\alpha$. For the policy the algorithm favors, exploit with $b \approx 2h$ around the greedy dose.
  \item \emph{Exploration.} Keep $\alpha<1$; a smaller $\alpha$ improves the rate $T^{-4(1-\alpha)/5}$ at the cost of regret.
  \item \emph{Weights.} Use \MinVar when the density near the target is stable or $\alpha$ is near $1$, \StabVar when it varies with the history, and $w_t\equiv1$ when the design is unknown.
  \item \emph{Bandwidth.} Take $h\asymp E^{-1/5}$ for estimation and undersmooth, e.g. $h\asymp E^{-1/5}/\log E$. Note, while undersmoothing \labelcref{ass:underSmooth} offers a route for normality, we practically recommend a bias correction approach. As noted in \citet{kennedy2017nonparametric}, ``it is mostly a technical device''.  
  \item \emph{Pre-specify and log.} Fix $h$, $b$, $\rho_t$ and $U$ before collection and log the propensity function for flexibility in off-policy evaluation.
\end{itemize}
\section{Simulations}\label{sec:sims}
In this section, we carry out an extensive simulation study to verify our theoretical findings.

\paragraph{Setup}
We carry out 1000 simulations of the process below, computing 95\% coverage and mean squared error averaged over the simulations. More formally, for an estimate $\hat V^{(n)}$ and an interval $[\hat V^{(rn}\pm q^{(n)}\widehat{\mathrm{se}}^{(n)}]$, and the true value $V(\policy)$ is computed by quadrature. We report
$\mathrm{Coverage} = \frac1N\sum_{n=1}^N\I{\big|\hat V^{(n)}-V(\policy)\big|\le q^{(n)}\widehat{\mathrm{se}}^{(n)}} $ and $ \mathrm{MSE} = \frac1N\sum_{n=1}^N\big(\hat V^{(n)}-V(\policy)\big)^2$, for $q^{(r)}$ the $t_{0.975}$ quantile of each method (\cref{app:sim:base}).
\\\\
Each round $t=1,\dots,T$ ($T=5000$): state $S_t\sim\mathrm{Unif}[-1,1]$, dose $A_t\in[0,1]$ from the logger, reward $R_t=\mu(S_t,A_t)+\epsilon_t$ for  $\epsilon_t \sim N(0,0.5^2)$.

\textbf{Outcome Generation.}
We cover three cases:
\begin{description}[leftmargin=1.2em,nosep]
\item[Quadratic.] The first case is a quadratic outcome mechanism, in which the model will be correctly specified. The purpose of this is to isolate effects from adaptivity. The outcome model is given by $\mu(s,a)=2a(1-a)+0.5 sa$, yielding an optimal dose of $g^*(s)=\tfrac12+\tfrac s8$, and $\partial_a^2\mu\equiv-4$.
\item[Bimodal.] The second case is a bimodal outcome model.
Two Gaussian ridges of width $0.09$ at $a=0.25$ and $0.75$ are simulated with amplitudes $0.85\mp 0.20 s$ and a crossing at $s=0$: $\mu(s,a) =(0.85-0.2s) e^{-(a-0.25)^2/(2\cdot0.09^2)} +(0.85+0.2s) e^{-(a-0.75)^2/(2\cdot0.09^2)}$.
\item[RFF.] Our final case is a challenging nonparametric function.
A random-Fourier-feature draw ($64$ features) approximating a squared-exponential Gaussian process is used, with length-scales $(\ell_s,\ell_a)=(0.6,0.12)$, rescaled to mean $0.3$, sd $0.35$ on a grid.
\end{description}

\textbf{Adaptive logger}
Exploration probability $\rho_t=t^{-\alpha}$ with $\alpha\in\{0,0.25,0.5,0.75,0.9\}$.
With $U$ uniform on $[0,1]$, the exploration floor of \cref{sec:interp} is $\varepsilon_t=\rho_t$.
\begin{itemize}[leftmargin=1.2em]
\item \textbf{Outcome model / greedy.} Least-squares polynomial in $s^ja^k$, $j\le2$, $k\le d$, with dose degree $d\in\{2,\dots,8\}$ using (forward in time) 5-fold cross-validation \citep{bibaut2021postcontextual}.
Greedy dose $\hat g_t(s)=\arg\max_{a\in\{0.05,\dots,0.95\}}\hat\mu_t(s,a)$.
\item \textbf{Exploit step.} We test both bandwidth exploitation ($b$) and point dosing to support our recommendations.
\begin{itemize}
\item Bandwidth Exploitation: $A_t\sim\mathrm{Unif}\,B_t(S_t)$ with $B_t(s)=[\hat g_t(s)-b,\hat g_t(s)+b]$, $b=0.08$.
\item Point Dose: $A_t=\hat g_t(S_t)$ exactly.
Exploit rounds then carry zero weight in the kernel correction, so only exploration rounds inform it.
\end{itemize}
\item \textbf{Propensities} are the known density $\pi_t(a\mid s)=\rho_t+(1-\rho_t) \I{a\in B_t(s)}/|B_t(s)|$, where $|B_t(s)|$ is the length of the clipped band.
\end{itemize}

\textbf{Target policies}
\begin{center}\small
\begin{tabularx}{\columnwidth}{@{}ll>{\raggedright\arraybackslash}X@{}}\toprule
label & $\policy(s)$ & role\\\midrule
\texttt{on-target} & $g^*(s)$ & true optimum \\
\texttt{off-target} & $o+0.1s$ & exploration data only\\
\bottomrule
\end{tabularx}\end{center}
The \texttt{off-target} offset is $o=0.15$, $0.50$, $0.25$ in the quadratic, bimodal and RFF respectively, away from exploitation of the algorithm.

\begin{description}[leftmargin=1.2em,nosep]
    \item[Proposed.] Kernel DR scores with predictable \MinVar weights and robust bias correction (\cref{sec:methods}).
    \item[Direct plug-in.] $\hat V=\frac1n\sum_t\hat\mu(S_t,\policy(S_t))$ with the shared outcome fit $\hat\mu$.
    \item[Kennedy cross-fit.] 5-fold CV and cross-fitting, treating the data as i.i.d.\ \citep{kennedy2017nonparametric}.
    \item[Kallus--Zhou.] Self-normalized kernel IPW with their plug-in MSE-optimal bandwidth.
    \item[Zhan.] Binned AIPW over $B$ dose bins ($B\in\{4,\dots,28\}$ by forward CV) with their variance-stabilizing weights $(\E_S[1/e_t(S)])^{-1/2}$, where $e_t(s)$ is the probability of the target's bin at $s$.
\end{description}
All intervals are Wald-type, differing only in the variance estimator with i.i.d. / non-i.i.d. procedures for hyperparameter selection. Further details are in \cref{app:sim:base}.

\subsection{Results.}

Our primary results are shown in \cref{fig:coverage} and \cref{fig:mse}, under bandwidth exploitation ($b > 0$).
\begin{figure}[ht!]
  \centering
  \includegraphics[width=0.85\textwidth]{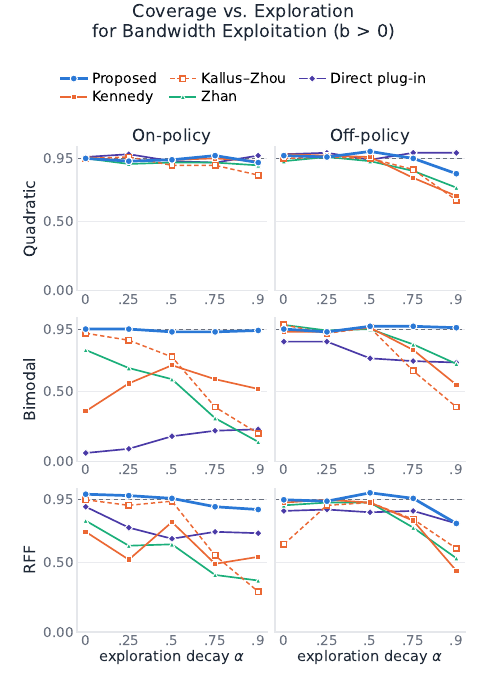} 
  \caption{Coverage for 95\% confidence intervals for the policy value $V$ as exploration decays ($\rho_t = t^{-\alpha}$).
  Columns: target policy at the true optimum (on-target), and a fixed off-optimum rule (off-target).
  Rows: outcome mechanisms (Quadratic, Bimodal, RFF).
  The logger exploits with half-width $b = 0.08$ around its greedy dose; the kernel bandwidth $h$ is chosen along the shared forward cross-validated path.
  $T = 5000$ rounds, 1000 replications per cell.
  Dashed line: nominal level $0.95$.}
  \label{fig:coverage}
\end{figure}

\begin{figure}[ht!]
  \centering
  \includegraphics[width=0.85\textwidth]{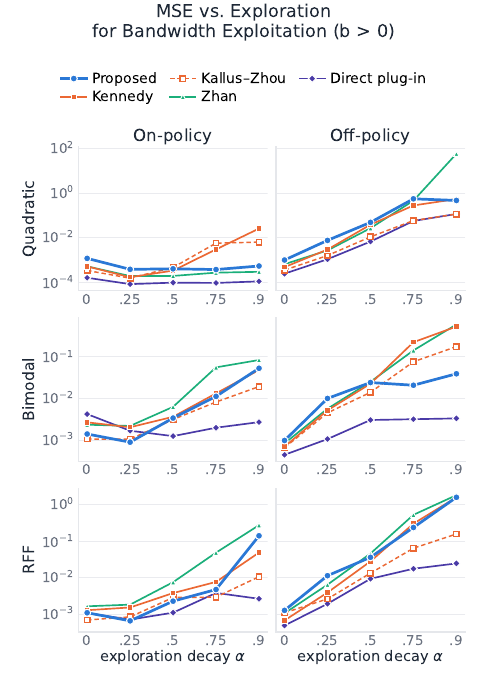}
  \caption{Mean squared error of the policy-value estimates (log scale), same design as Figure~\ref{fig:coverage}.}
  \label{fig:mse}
\end{figure}

\cref{fig:coverage} shows that our method (blue circles) attains coverage at or near the nominal level for all $\alpha \le 0.75$, for both the on-target and off-target policies and across the quadratic, bimodal, and RFF outcome models.
In contrast, the baselines undercover in the misspecified settings of bimodal and RFF.
Interestingly, \cref{fig:mse} shows our mean squared error is comparable to that of the other kernel-based estimators.
The direct plug-in attains the lowest mean squared error.
At $b=0$, point dose exploitation, we see lower coverage as shown in \cref{fig:atomCmp}.

We consider coverage in more detail.
Under the quadratic model the outcome model is correctly specified and bias is small, so all methods cover well for $\alpha \le 0.5$.
Interestingly, off-target, at high exploitation, we cover better than competing methods except for the direct method, which is perfectly specified.
This case offers a nice sanity-check to verify the effect of adaptivity, as we observe lower coverage under higher exploitation and off-target policies.

The bimodal and RFF settings demonstrate the inability of competing baselines to perform inference in this setting.
In both bimodal and RFF, the on-target (optimal) policy falls at a maximum of the outcome function where bias is largest leading to undercoverage.
Most baselines even undercover under uniform logging ($\alpha = 0$), worsening as exploration decays: for example, at $\alpha = 0.75$, coverage is $0.59$ for \citet{kennedy2017nonparametric}, $0.39$ for \citet{kallus2018policy}, $0.31$ for \citet{zhan2021offpolicy}, and $0.22$ for the direct plug-in.
In contrast, off-target policies lie in flatter regions so competing methods cover until exploration is scarce.

\cref{fig:mse} demonstrates our mean squared error is comparable to that of the other kernel-based estimators.
For on-target policies, all kernel-based methods attain similar error.
For off-target policies, our findings are driven by the bandwidth $h$ since smoothing bias grows as $\mu''h^2$ while variance falls as $1/(Th)$.
In the quadratic and RFF models, the outcome model is nearly flat, so a wide bandwidth adds little bias while reducing variance.
However, our narrower bandwidth, plus variance from bias correction yields higher relative error.
On the bimodal model, there is greater curvature so our narrower bandwidth pays off.

Finally, \cref{fig:teff} plots coverage against the effective sample size $T_{\mathrm{eff}}=\sum_t\pr(|A_t-\policy(S_t)|\le h)$, roughly the number of logged observations near the target dose.
We include this plot to demonstrate that our method covers at the nominal level once $T_{\mathrm{eff}}$ reaches a certain level, while competing baselines fail in comparison.
For on-target policies, competing baselines coverage worsens as $T_{\mathrm{eff}}$ grows, because their intervals shrink without correction.
At off-target policies, their coverage is worst when $T_{\mathrm{eff}}$ is small, because their variance estimates ignore the adaptive design.

\begin{figure}[ht!]
  \centering
  \includegraphics[width=0.85\textwidth]{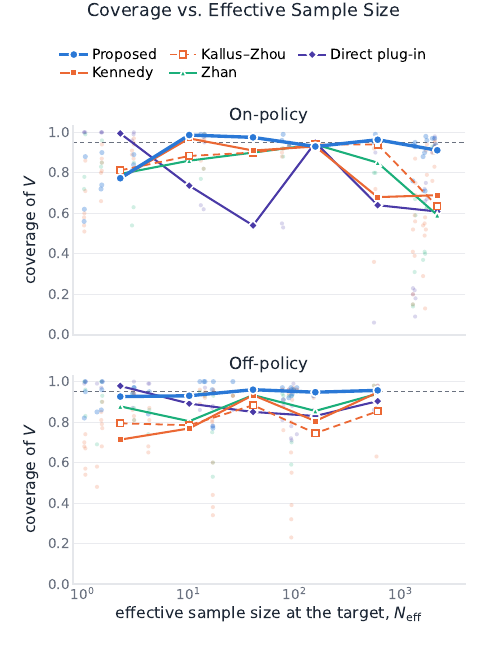}
  \caption{Coverage of nominal 95\% intervals versus the effective sample size $T_{\mathrm{eff}}$ at the target, for on-target (top) and off-target (bottom) policies.
  Faint points are individual configurations (outcome model, exploration decay $\alpha$, and logger); lines average coverage within bins of $T_{\mathrm{eff}}$.
  Dashed line: $0.95$.}
  \label{fig:teff}
\end{figure}

\section{Conclusion}
We propose a methodology for performing statistical inference on contextual bandits under continuous actions. The setting of continuous action bandits is challenging due to smoothing requirements and non-trivial exploration bounds, which affect the whole pipeline from the deployed algorithm to the analysis choices across bandwidth and weighting. We characterize bandwidth and weighting tradeoffs needed to obtain optimal estimation rates and inference, recovering i.i.d. results up to exploration constraints. We conclude with design implications, recommending the practitioner to choose \MinVar choice with known propensities and exploitation settings, \StabVar when instability arises, and no weighting with little information on the log, supporting our theoretical findings with an extensive simulation study. 

\newpage 

\renewcommand{\refname}{}

\clearpage 

\appendix


\renewcommand\Authfont{\fontsize{12}{14.4}\selectfont} 
\renewcommand\Affilfont{\fontsize{11}{10.8}\itshape} 

\crefname{section}{Appendix}{Appendices}

\counterwithin{equation}{section}
\counterwithin{figure}{section}
\counterwithin{table}{section}
\pagenumbering{arabic}
\setcounter{page}{1}

\renewcommand{\thesection}{Appendix~\Alph{section}}
\appendix 

\begin{center}
    \textbf{\large Supplementary Materials to\\ ``\titleString''}\\ \vspace{0.25cm}
\end{center}

\section{Proofs}

\subsection{\texorpdfstring{Proof of \cref{lemma:weightReg}}{}}
\label{pf:weightReg}
\begin{proof}
\listsafesectioning

\paragraph{(i) Individual Weight Control.}
Recall, we define
$$ u_t(s) = \frac{1}{\pi_t(\policy(s) \mid s)}, \bar u_t = \E u_t$$
and $g(x)=x^{-\gamma}, \gamma > 0$. Then rewrite $\frac{w_t}{\bar w_t}$, for $\bar w_t = g(\E[u_t(S)])$, as
\begin{align*}
    \frac{w_t}{\bar w_t}= (\frac{\E u}{\E \bar u})^{-\gamma}
\end{align*}

We claim that $|\frac{\E u}{\E \bar u}-1 | \leq \rho_t $, proven below. With this in hand, we can take sufficiently large $t$ s.t. $\rho_t \leq 1/2$ and invoke the mean value theorem to conclude
\begin{align*}
    |\frac{w_t}{\bar w_t}-1| \leq |\frac{\E u}{\E \bar u}-1| \sup_{|x-1| \leq 1/2} \gamma x^{-\gamma-1} \leq \rho_t \gamma (1/2)^{-\gamma-1}
\end{align*}
Then, 
$$ w_t = \bar w_t (1+ \varepsilon_t^w), \varepsilon_t^w=w_t/\bar w_t - 1$$

\textbf{Proof of Claim}\\
Rearrange the inequality in \labelcref{ass:suffCoverage},
\begin{align*}
    \bar u (1-\rho) \leq &u \leq \bar u (1 + \rho)\\
    \iff \E \bar u (1-\rho) \leq & \E u \leq \E \bar u (1 + \rho)\\
    \iff |\frac{\E u}{\E \bar u}-1 | &\leq \rho_t 
\end{align*}

\paragraph{(ii) Summed Weight Control.}
From (i), $ \sum_t w_t = \sum_t \bar w_t (1+\varepsilon_t^w)$ and $\E[\sum_t w_t] = \sum_t \bar w_t (1+\E \varepsilon_t^w)$. 
Thereafter, taking the ratio, subtracting 1, and renormalizing, we have that
\begin{align*}
    \frac{\sum_t w_t }{\E[\sum_t w_t ]} -1 &= \frac{\sum_t w_t (1 + \varepsilon_t^w)-\sum_t \bar w_t (1 + \E[\varepsilon_t^w])}{\sum_t \bar w_t (1 + \E[\varepsilon_t^w])}\\
    &= \frac{\sum_t \bar w_t (\varepsilon_t^w-\E \varepsilon_t^w)}{\sum_t \bar w_t ( 1+ \E[\varepsilon_t^w])}
\end{align*}
Note, for large $T$, the denominator is at least $1/2 \sum_t \bar w_t$ since $\E[\varepsilon_t^w]$ can be taken small, $\leq 1/2$. 
\\\\
Take absolute values, and by triangle inequality, we ahve
\begin{align}
    \delta_T \lesssim \frac{\sum_t \bar w_t \E[\rho_t]}{\sum_t \bar w_t} 
\end{align}
which by Toeplitz's Lemma, goes to 0.

\end{proof}

\subsection{\texorpdfstring{Proof of \cref{lemma:biasVariance}}{}}
\label{pf:biasVariance}
\begin{proof}
  \listsafesectioning

  Recall the score and the definitions of the direct term $D_t$ and correction term $C_t$:
    $$ \hat{\Gamma}_t = \underbrace{\int K_h(a-\policy(S_t)) \hat \mu_{t-1}(S_t,a) da}_{D_t} +\underbrace{\frac{K_h(A_t-\policy(S_t))}{\pi_t(A_t\mid S_t)}\big(R_t-\hat\mu_{t-1}(S_t,A_t)\big)}_{C_t} $$
    
    \subsubsection{Bias.} We claim $V_h(\policy)=\E[\hat\Gamma_t\mid\Ft{t-1}]$ and $V_h(\policy) - V(\policy) = O_p(h^2)$. The calculations will bound $\E[\hat{\Gamma}_t \mid \Ft{t-1}, S_t=s]$, then marginalize to conclude the result.

    \paragraph{1. Simplification.}

    We first compute the conditional mean at a fixed context. The propensity cancels against the sampling density since
    \begin{align*}
      \E\big[C_t\mid S_t=s,\Ft{t-1}\big] &=\int\pi_t(a\mid s)\frac{K_h(a-\policy(s))}{\pi_t(a\mid s)}\big(\mu-\hat\mu_{t-1}\big)(s,a)da \\
      &=\int K_h\big(a-\policy(s)\big)\big(\mu(s,a)-\hat\mu_{t-1}(s,a)\big)da
    \end{align*}
     Adding the direct term, the $\hat\mu_{t-1}$ terms cancel, yielding
     \[
       \E\big[\hat\Gamma_t\mid S_t=s,\Ft{t-1}\big]=\int K_h\big(a-\policy(s)\big)\mu(s,a)da
     \]
    Since $S_t$ is independent of $\Ft{t-1}$, marginalizing gives
    \[
      \E\big[\hat\Gamma_t\mid\Ft{t-1}\big]=V_h(\policy)
    \]
    almost surely.

    \paragraph{2. Taylor Expansion.}
    In order to bound this expression, we will take a Taylor Expansion.
    \begin{align*}
        \mu(s,\policy(s)+hu) &=\mu(s,\policy(s)) + hu \partial_a \mu(s,\policy(s)) \\
        & \qt + \tfrac12h^2 u^2 \partial_a^2 \mu(s,\policy(s)) + \tfrac 1 6 h^3 u^3 \partial_a^3 \mu(s, \policy(s)+\theta hu)
    \end{align*}
    for some $\theta\in(0,1)$. 
    Integrating against $K$ with $\int K=1$, $\int uK=0$, $\int u^2 K = \mu_2(K)$,    
    $$\E[\hat\Gamma_t\mid\Ft{t-1}, s]=\mu(s,\policy(s))+\tfrac12 h^2\mu_2(K)\,\partial_a^2\mu(s,\policy(s))+ R_3$$
    where $$ | R_3 | \leq \tfrac 1 6 h^3 C_3 \int |u|^3 K(u) du \leq \tfrac 1 6 h^3 C_3 $$

    Taking $\E_S$ and subtracting $V(\policy)=\E_S[\mu(S,\policy(S))]$, 
    $$ V_h(\policy)-V(\policy)=\tfrac12h^2\mu_2(K)\E_S[\partial_a^2\mu(S,\policy(S))]+o_p(h^2)=O_p(h^2) $$

    \subsubsection{Variance.}

    Recall definitions $\sigma_t^2 = \V(\hat\Gamma_t\mid\Ft{t-1})$, $m_2(s,a)=\sigma_R^2(s,a)+(\mu-\hat\mu_{t-1})^2(s,a)$, so $\underline\sigma^2\le \bar m_2=m_2\le\bar\sigma^2+4\bar R^2 $, $R(K)=\int K^2$, and $\bar R$ is an upper bound on reward. 
    We claim the per-round variance is given by the primary variance term and remainder
    $$\sigma_t^2 = h^{-1} \bar V_t + r_t$$
    for
    \begin{align*}
      \bar V_t = R(K) \E_S\left[\frac{m_2(S,\policy(S))}{\pi_t(\policy(S)\mid S)} \right],
      \qquad |r_t| \lesssim (h \varepsilon_t)^{-1/2},
    \end{align*}
    with high probability.
    We additionally claim this quantity is upper and lower bounded as
    $$     c_V \le \bar V_t \le C_V/\varepsilon_t  $$
    for $c_V=R(K)\underline\sigma^{2}/\bar B$ and $C_V=R(K)(\bar\sigma^{2}+4\bar R^{2})$.\\\\
    We now show the first claim, considering $\hat\Gamma_t=D_t+C_t$ and computing variance term by term on $\V(D_t)+\V(C_t) + 2 Cov(D_t, C_t)$.

    \paragraph{Correction term, second moment.} One factor of $\pi_t$ cancels leaving
    \begin{align*}
      \E[C_t^2\mid S_t=s,\Ft{t-1}] &=\int \pi_t(a\mid s)\frac{K_h(a-\policy(s))^2}{\policy_t(a\mid s)^2}m_2(s,a)da \\
      &=\int\frac{K_h(a-\policy(s))^2}{\pi_t(a\mid s)}m_2(s,a) da \\
      &= h^{-1} \int_{-1}^{1} \frac{K(u)^2 m_2(s, \policy + hu)}{\pi_t(\policy + hu \mid s)}  du
    \end{align*}
    by substitution $a = \policy + hu$ which implies the kernel is given by $K_h^2 da = h^{-2}K^2 h du$. 

    On the kernel's support $|u|\le1$ we have $|a-\policy(s)|\le h\le\delta_0$, so by \labelcref{ass:bdd} $\underline\sigma^2\le m_2(s,a)\le\bar\sigma^2+4\bar R^2$, and by \labelcref{ass:exp} $1/\bar B\le1/\pi_t(a\mid s)\le1/\varepsilon_t$. Bounding the integrand below and above, recalling $\int_{-1}^{1}K(u)^2du=R(K)$,
    \begin{align*}
      \E[C_t^2\mid S_t=s,\Ft{t-1}] &\geq h^{-1}\int_{-1}^{1}K(u)^2\frac{\underline\sigma^2}{\bar B}du =\frac{R(K)\underline\sigma^2}{h\bar B} \\
      \E[C_t^2\mid S_t=s,\Ft{t-1}] &\leq h^{-1}\int_{-1}^{1}K(u)^2\frac{\bar\sigma^2+4\bar R^2}{\varepsilon_t}du =\frac{R(K)\big(\bar\sigma^2+4\bar R^2\big)}{h\varepsilon_t}
    \end{align*}
    In other words, for $c_V, C_V$ defined as in the result,
    \[
      \frac{c_V}{h}\leq \E[C_t^2\mid S_t=s,\Ft{t-1}]\le\frac{C_V}{h\varepsilon_t}
    \]

    \paragraph{Correction term, first moment.}
    By the cancellation in the bias step, 
    $$ | \E[C_t\mid\Ft{t-1}] | \le \E_S \int K(u)|\mu-\hat\mu_{t-1}|du \le2\bar R $$
    
    Hence, 
    \begin{align*}
    \V(C_t\mid\Ft{t-1}) &=\E[C_t^2\mid\Ft{t-1}]-(\E[C_t\mid\Ft{t-1}])^2=h^{-1}\bar V_t+O_P(1)
    \end{align*}

    \paragraph{Direct term.} Since $\int K_h = 1$ and $|\hat \mu_{t-1}| \leq \bar R$, $|D_t|\le\bar R$. As a result, $\V(D_t \mid \Ft{t-1}) \leq \bar R^2 = O(1)$.

    \paragraph{Cross term.} By Cauchy--Schwarz and the upper bound on $\bar V_t$,
    \begin{align*} 
        | Cov(D_t, C_t \mid \Ft{t-1}) | &\leq \big\{\V(D_t\mid\Ft{t-1}) \V(C_t\mid\Ft{t-1})\big\}^{1/2} \\
        &\lesssim \big\{\bar R^2 R(K) (\bar\sigma^{2}+4\bar R^{2})(h\varepsilon_t)^{-1}\big\}^{1/2}\\
        &= O\big((h\varepsilon_t)^{-1/2}\big)
    \end{align*}

    \paragraph{Combining.} Combining all of the bounds, we find
    \begin{align*}
      \sigma_t^2 &= \underbrace{\V(D_t\mid\Ft{t-1})}_{O(1)} +\underbrace{2 Cov(D_t,C_t\mid\Ft{t-1})}_{O((h\varepsilon_t)^{-1/2})} +\underbrace{\V(C_t\mid\Ft{t-1})}_{h^{-1}\bar V_t+O(1)} \\
      &= h^{-1}\bar V_t+O_p\big((h\varepsilon_t)^{-1/2}\big)
    \end{align*}
    The lower and upper bound follows from our bound on $\E[C_t^2]$. 
    
    \subsubsection{Variance of the Process}

    Since $w_t$ is $\Ft{t-1}$-measurable and $\E[\Delta_t\mid\Ft{t-1}]=0$ by part (a), $\{w_t\Delta_t\}$ is a martingale difference sequence and cross-terms cancel, meaning for $\E[\Delta_t^{2}\mid\Ft{t-1}]=\sigma_t^{2}$,
    \[
      \Sigma_T=\E\Big[\Big(\sum_{t\le T}w_t\Delta_t\Big)^{2}\Big]
      =\E\Big[\sum_{t\le T}w_t^{2}\Delta_t^{2}\Big]
      =\sum_{t\le T}\E\big[w_t^{2}\sigma_t^{2}\big]
    \]

    It remains to bound each $\E[w_t^{2}\sigma_t^{2}]$. 
    First, note the remainder term sums the following terms, almost surely:
    \[
      r_t = \underbrace{\V(D_t\mid\Ft{t-1})}_{\le \bar R^2}
          - \underbrace{\big(\E[C_t\mid\Ft{t-1}]\big)^2}_{\le 4 \bar R^2}
          + \underbrace{2\,\mathrm{Cov}(D_t,C_t\mid\Ft{t-1})}_{\le 2\bar R \frac{C_V}{(h\varepsilon_t)^{1/2}}}
    \]
    Hence, let $C_r=5\bar R^2+2\bar R C_V^{1/2}$ so that $|r_t|\le C_r(h\varepsilon_t)^{-1/2}$. Applying the bound from (b), $c_V\le\bar V_t\le C_V/\varepsilon_t$ from above to $\sigma_t^{2}=h^{-1}\bar V_t+r_t$,
    \begin{align*}
      \sigma_t^{2} & \ge \frac{c_V}{h}-C_r(h\varepsilon_t)^{-1/2} = \frac{c_V}{h}\left(1-\frac{C_r}{c_V}\cdot\frac{h}{(h\varepsilon_t)^{1/2}}\right)  = \frac{c_V}{h}\left(1-\frac{C_r}{c_V}\Big(\frac{h}{\varepsilon_t}\Big)^{1/2}\right) \\
      \sigma_t^{2} & \le \frac{C_V}{h\varepsilon_t}+C_r(h\varepsilon_t)^{-1/2} = \frac{C_V}{h\varepsilon_t}\left(1+\frac{C_r}{C_V}\cdot\frac{h\varepsilon_t}{(h\varepsilon_t)^{1/2}}\right) = \frac{C_V}{h\varepsilon_t}\left(1+\frac{C_r}{C_V}(h\varepsilon_t)^{1/2}\right)
    \end{align*}

    Now we simplify this notation using a couple of bounds. 
    Let $C_p=C_r/c_V$. Since $c_V\le C_V$, $C_r/C_V \le C_r/c_V$. We can bound $C_V=R(K)(\bar\sigma^2+4\bar R^2)\ge4R(K)\bar R^2$, so $\bar R\le(C_V/4R(K))^{1/2}$, implying
    \[
      C_r = 5\bar R^2+2\bar R C_V^{1/2}
       \le \frac{5C_V}{4R(K)}+\frac{C_V}{R(K)^{1/2}}
       = C_V\Big(\frac{5}{4R(K)}+\frac{1}{R(K)^{1/2}}\Big)
    \]
    The result then follows by summing.
\end{proof}

\subsection{\texorpdfstring{Proof of \cref{thm:mse}}{}}
\label{pf:mse}

\begin{proof}
  \listsafesectioning
Throughout, we will use the total expectation on the event that 
\labelcref{lemma:weightReg}(ii) holds, that is, 
$$ \mathcal{E}_T = \{ |\frac{\sum_t w_t}{\E[\sum_t w_T]}-1|\leq1/2 \} $$
By Markov's, $\pr(\mathcal{E}_T^C)\leq 2 \E[|\frac{\sum_t w_t}{\E[\sum_t w_t]}-1|] \lesssim \delta_T$ so it occurs with high probability.

Subsequently, the proof proceeds as follows. First
\begin{align*}
    \mathrm{MSE}
    &= \E\big[(\hat{V}_T - V)^2(\policy)\big] \\
    &= \E\big[(\hat{V}_T - V_h)^2(\policy)\big]
       + 2 \E\big[(V_h-V)(\policy)(\hat{V}_T-V_h)(\policy)\big]
       + (V_h-V)^2(\policy) \\
    &= \underbrace{\E\left[\frac{\big(\sum_tw_t\Delta_t\big)^2}{\big(\sum_tw_t\big)^2}\right]}_{\textup{(I)}}
       + \underbrace{2(V_h-V)(\policy) \E\left[\frac{\sum_tw_t\Delta_t}{\sum_tw_t}\right]}_{\textup{(II)}}
       + (V_h-V)(\policy)^2 \\
    &= \frac{\Sigma_T}{\big(\E\sum_t w_t\big)^2}\big(1+O(\delta_T)\big)
       + O\left(h^2\delta_T^{1/2} \frac{\Sigma_T^{1/2}}{\E\sum_tw_t}\right)
       + O(h^4)
\end{align*}
where the third equation follows as $\hat V_T-V_h=\frac{\sum_t w_t \Delta_t }{\sum_t w_t}$ and the bias result on $(V_h-V)(\policy)=O(h^2)$ from \cref{lemma:biasVariance}(a). We claim the final line, and demonstrate below.

\paragraph{Term \textup{(I)}.} 
We claim 
$$ \textup{(I)} = \frac{\Sigma_T}{\big(\E\sum_tw_t\big)^2}\big(1+O(\delta_T)\big)$$
We show this by demonstrating
\[
  \left|\E\left[\frac{\big(\sum_tw_t\Delta_t\big)^2}{\big(\sum_tw_t\big)^2}\right]
    -\frac{\Sigma_T}{\big(\E\sum_tw_t\big)^2}\right|
   \le O(\delta_T) \frac{\Sigma_T}{\big(\E\sum_tw_t\big)^2}
\]

From $\textup{(I)}$, multiply and divide by $\big(\E\sum_tw_t\big)^2$, and consider
\[
  \frac{\big(\sum_tw_t\Delta_t\big)^2}{\big(\sum_tw_t\big)^2} -\frac{\big(\sum_tw_t\Delta_t\big)^2}{\big(\E\sum_tw_t\big)^2}
  =\frac{\big(\sum_tw_t\Delta_t\big)^2}{\big(\E\sum_tw_t\big)^2} \left[\frac{\big(\E\sum_tw_t\big)^2}{\big(\sum_tw_t\big)^2}-1\right]
\]

We handle on $\mathcal{E}_T$ and on $\mathcal{E}_T^C$ as separate cases below.

\textbf{On $\mathcal{E}_T$:}\\
On $\mathcal{E}_T$ the sum of weights behaves well as
$\tfrac12 \E[\sum_t w_t] \leq \sum_t w_t \leq \tfrac32 \E[\sum_t w_t]$. Use this to simplify the bracketed expression
\begin{align*}
    \left|\frac{\big(\E[\sum_t w_t]\big)^2}{(\sum_t w_t)^2}-1\right|
    &= \frac{\big|\E[\sum_t w_t]-\sum_t w_t\big| \cdot \big|\E[\sum_t w_t]+\sum_t w_t\big|}{(\sum_t w_t)^2}\\
    &\lesssim \frac{\big|\E[\sum_t w_t]-\sum_t w_t\big| \cdot \E[\sum_t w_t]}{\big(\E \sum_t w_t\big)^2} \\
    &\lesssim \left|\frac{\sum_t w_t}{\E[\sum_t w_t]}-1\right|
\end{align*}
using $|\E\sum_t w_t + \sum_t w_t| \leq \tfrac52\E\sum_t w_t$ and
$(\sum_t w_t)^2 \geq \tfrac14 (\E\sum_t w_t)^2$ in the second line.

Multiply by $\frac{(\sum_t w_t \Delta_t)^2}{\big(\E\sum_t w_t\big)^2}$ to recover the primary expression of interest. Since the bracket is at most $\tfrac12$ in absolute value on $\mathcal{E}_T$, Cauchy--Schwarz with
$\E[(\sum_t w_t\Delta_t)^4]\lesssim\Sigma_T^2$ gives
\begin{align*}
    \left|\E\left[\frac{(\sum_t w_t \Delta_t)^2}{\big(\E\sum_t w_t\big)^2}
      \left(\frac{\big(\E[\sum_t w_t]\big)^2}{(\sum_t w_t)^2}-1\right)\I{\mathcal{E}_T}\right]\right|
    &\lesssim \frac{\big(\E[(\sum_t w_t\Delta_t)^4]\big)^{1/2}}{\big(\E\sum_t w_t\big)^2} \left(\E\left|\frac{\sum_t w_t}{\E[\sum_t w_t]}-1\right|\right)^{1/2} \\
    &\lesssim \frac{\Sigma_T}{\big(\E \sum_t w_t\big)^2} \delta_T^{1/2}
\end{align*}

\textbf{On $\mathcal{E}_T^C$:}\\
Note the estimator is a weighted average of the martingale increments: 
\begin{align*}
    \left|\frac{\sum_t w_t\Delta_t}{\sum_t w_t}\right|
    \leq \sum_t \frac{w_t}{\sum_s w_s}|\Delta_t|
    \leq \max_{t\le T}|\Delta_t|
    \leq \max_{t\le T} M_t 
\end{align*}
Combined with Markov's inequality, $\pr(\mathcal{E}_T^C) \leq 2\delta_T$, yielding
\begin{align*}
    \left|\E\left[\frac{(\sum_t w_t\Delta_t)^2}{(\sum_t w_t)^2}\I{\mathcal{E}_T^C}\right]\right| &\leq \big(\max_{t\le T} M_t\big)^2 \pr(\mathcal{E}_T^C) \\
    &\leq 2\delta_T \big(\max_{t\le T} M_t\big)^2 \\
    &= o_p\left(\frac{\Sigma_T}{\big(\E\sum_t w_t\big)^2}\right)
\end{align*}

\textbf{Conclude on I:}
Combining above,
\begin{align*}
    \textup{(I)} = \frac{\Sigma_T}{\big(\E\sum_t w_t\big)^2}\big(1+O(\delta_T^{1/2})\big)
\end{align*}


\paragraph{Term \textup{(II)}.} 
$\E\big[\sum_tw_t\Delta_t\big]=0$ and $\E[\sum_tw_t]$ is a constant, so
\[
  \E\left[\frac{\sum_tw_t\Delta_t}{\sum_tw_t}\right]
  =\E\left[\Big(\sum_tw_t\Delta_t\Big)\left(\frac{1}{\sum_tw_t}-\frac{1}{\E\sum_tw_t}\right)\right]
\]

\textbf{On $\mathcal{E}_T$}\\
\begin{align*}
    |\frac{1}{\sum_t w_t}- \frac{1}{\E[\sum_t w_t]}| &= \frac{|\E[\sum_t w_t]-\sum_t w_t|}{\sum_t w_t)(\E \sum_t w_t)} \\
    &\leq 2\frac{1}{\E[\sum_t w_t]}|\frac{\sum_t w_t}{\E[\sum_t w_t]}-1|
\end{align*}
Now, multiplying in $\sum_t w_t \Delta_t$, 
$$ |\E[\sum_t w_t \Delta_t (\frac{1}{\sum_t w_t}- \frac{1}{\E[\sum_t w_t]}) \I{\mathcal{E}_T}]| \leq \frac{\Sigma_T^{1/2}}{\E[\sum_t w_t]} \delta_T^{1/2}$$

\textbf{On $\mathcal{E}_T^C$}\\
By \labelcref{ass:exp}, $\pi_t(\policy\mid s)\geq\varepsilon_t$, so $\E_S[\pi_t(\policy\mid S)^{-1}]\leq\varepsilon_t^{-1}$. Since $g(x)=x^{-\gamma}$ is decreasing,
$w_t\geq\varepsilon_t^{\gamma}$ surely. Summing,
\[
  \sum_t w_t \geq \sum_t \varepsilon_t^{\gamma} = \sum_t \underline{w}_t
\]
Write $\kappa_T = \E[\sum_t w_t]/(\sum_t \underline{w}_t) \geq1$. Then, by Cauchy--Schwarz and $\pr(\mathcal{E}_T^C)\leq2\delta_T$,
\begin{align*}
    \left|\E\left[\frac{(\sum_t w_t\Delta_t)^2}{(\sum_t w_t)^2}\I{\mathcal{E}_T^C}\right]\right|
    &\leq \frac{1}{(\sum_t \underline{w}_t)^2}\E\left[\Big(\sum_t w_t\Delta_t\Big)^2\I{\mathcal{E}_T^C}\right] \\
    &\leq \frac{1}{(\sum_t \underline{w}_t)^2}
      \Big(\E\Big[\big(\sum_t w_t\Delta_t\big)^4\Big]\Big)^{1/2}
      \pr(\mathcal{E}_T^C)^{1/2} \\
    &\lesssim \kappa_T^2 \frac{\Sigma_T}{\big(\E\sum_t w_t\big)^2} \delta_T^{1/2}
\end{align*}
using $\E[(\sum_t w_t\Delta_t)^4]\lesssim\Sigma_T^2$ and $(\sum_t \underline{w}_t)^{-2}=\kappa_T^2(\E\sum_t w_t)^{-2}$.
This is $o\big(\Sigma_T/(\E\sum_t w_t)^2\big)$ whenever $\kappa_T^4\delta_T\to0$. 
\\\\
\textbf{Conclude on II:}\\
Then, combine to conclude
$$ | II | = 2 |(V-V_h)(\policy)| = O(h^2 \delta_T^{1/2} \frac{\Sigma_T^{1/2}}{\E[\sum_t w_t]})$$

Note,  $h^2\frac{\Sigma_T^{1/2}}{\E\sum_tw_t} \leq \tfrac12\big(h^4+(\frac{\Sigma_T^{1/2}}{\E\sum_tw_t})^2\big)$ for any $h$.  Multiplying by $\delta_T^{1/2}$ so
\[
  |\mathrm{II}|\lesssim\delta_T^{1/2}\Big(\frac{\Sigma_T}{(\E\sum_tw_t)^2}+h^4\Big)
\]
Since $\delta_T\to0$ by \cref{lemma:weightReg}, term \textup{(II)} is negligible relative to the variance and bias.

\end{proof}

\subsection{\texorpdfstring{Proof of \cref{lemma:varEst}}{}}
\label{pf:varEst}
\begin{proof}
  \listsafesectioning
    \subsubsection{Part 1: Oracle Variance Consistency}
    We will show
    \begin{align*}
      \E \left| \frac{\sum_t w_t^2 \sigma_t^2}{\Sigma_T} - 1 \right| \lesssim \E \frac{ \sum_t c_t\big(|\psi_t| + \E|\psi_t|\big)}{\sum_t c_t}
    \end{align*}
    for deterministic $c_t>0$ with $\sum_{t\le T} c_t \to \infty$ and random $\psi_t \xrightarrow{a.s.} 0$ with $| \psi_t | < \infty$ a.s. The result follows by Toeplitz's lemma.

    First, we simplify $\sigma_t^2$ by rewriting $\bar V_t$. 
    Recall from \cref{lemma:biasVariance}, $\bar V_t =  R(K) \E[\frac{m_2(S,A)}{\pi_t(\policy \mid S)}]$ where $m_2(s,a) = \sigma_R^2(s,a) + (\mu - \hat{\mu}_{t-1})^2(s,a)$, and the following identities:
    \begin{align*}
        \Big|\frac{1}{\pi_t(\policy\mid s)} - \E\Big[\frac{1}{\pi_t(\policy\mid s)}\Big]\Big| &\leq \rho_t \E\Big[\frac{1}{\pi_t(\policy\mid s)}\Big], \qquad \mbox{for }\rho_t\mbox{ as in }\labelcref{ass:suffCoverage}\\
        \big| m_2(s,a) - \sigma_R^2(s,a) \big| &\leq \lambda_t\, \sigma_R^2(s,a), \qquad \lambda_t = \frac{\|\mu-\hat{\mu}_{t-1}\|_\infty^2}{\underline{\sigma}^2} 
    \end{align*}
    Now, rewrite $\sigma_t^2$ plugging in the above identities 
        \begin{align*}
        \sigma_t^2 &= h^{-1}\bar V_t + r_t \\
        &= h^{-1} R(K) \E_S \left[\sigma_R^2(S,\policy) \E \left[\frac{1}{\pi_t(\policy\mid S)}\right]\right] \big(1+O(\rho_t)\big)\big(1+O(\lambda_t)\big) + O\big((h\varepsilon_t)^{-1/2}\big) \\
        &= h^{-1} R(K) \E_S \left[\sigma_R^2(S,\policy) \E \left[\frac{1}{\pi_t(\policy\mid S)}\right]\right] \Big(1+O\big(\rho_t+\lambda_t+\sqrt{h/\varepsilon_t} \big)\Big) \\
        &= h^{-1} C_t(K) [1+O(\theta_t)]
    \end{align*}
    where we have taken $C_t(K) =  R(K)\E[\sigma_R^2(s,a)\frac{1}{\pi_t(\policy\mid s)}] $ and $\theta_t =\rho_t +\lambda_t + \sqrt{h \varepsilon_t^{-1}}$. 

    The weights $w_t$ are also random, so we require a substitution before simplifying in the full expression. Weights in consideration are some power function $g\big(\E_S[1/\pi_t(\policy\mid S)]\big)$ of the averaged inverse density. By \cref{lemma:weightReg},
    \begin{align*}
        w_t &= \bar w_t\big(1+O(\rho_t)\big), \qquad
        \bar w_t = g\left(\E_S\left[\E\left[\frac{1}{\pi_t(\policy\mid S)}\right]\right]\right) \\
        \implies\ &  w_t^2=\bar w_t^2\big(1+O(\rho_t)\big)
    \end{align*}

    Finally, we plug in, for $c_t = \bar w_t^2 C_t(K)$ and $\psi_t = O(\rho_t+\theta_t)  = O(\theta_t)$,
    \begin{align*}
        w_t^2\sigma_t^2 &= \bar w_t^2\big(1+O(\rho_t)\big)\cdot h^{-1}C_t(K)\big(1+\theta_t\big) \\
        &= h^{-1}c_t(K)\big(1+\psi_t\big)
    \end{align*}
    Simplifying the primary expression of interest with these identitites, we find 
    \begin{align*}
        \left| \frac{\sum_t w_t^2\sigma_t^2}{\Sigma_T} - 1 \right| &=\left| \frac{h^{-1}\sum_t c_t(1+\psi_t)}{h^{-1}\sum_t c_t\big(1+\E\psi_t\big)} - 1 \right|\\
        &= \left|\frac{\sum_t c_t(\psi_t - \E\psi_t)}{\sum_t c_t(1+\E\psi_t)} \right|\\
        & \leq \frac{\sum_t c_t (| \psi_t| + \E|\psi_t|) }{ \sum_t c_t (1- \sup_t \E|\psi_t| )}
    \end{align*}
   Since $\sup_t \E |\psi_t| \rightarrow 0$ by \labelcref{ass:suffCoverage} and convergence of our nuisance, we have our initial claim, thereby demonstrating the result upon applying Toeplitz.
    \begin{align*}
        \E \left|\frac{\sum_t w_t^2\sigma_t^2}{\Sigma_T} - 1\right|
        \lesssim \E \frac{\sum_t \bar c_t\big(|\psi_t| + \E|\psi_t|\big)}{\sum_t \bar c_t}
    \end{align*}

    \subsubsection{Part 2: Consistent Variance Estimation}
    
    Decompose the ratio as follows:
    \begin{align*}
        \frac{\hat{\Sigma}_T}{\Sigma_T} &= \underbrace{\frac{\sum_t w_t^2 \sigma_t^2}{\Sigma_T}}_{\textup{(I)}}
        + \underbrace{\frac{\sum_t w_t^2 (\Delta_t^2 - \sigma_t^2)}{\Sigma_T}}_{\textup{(II)}}
        - 2 \underbrace{(\hat{V}_T - V_h)\frac{\sum_t w_t^2 \Delta_t}{\Sigma_T}}_{\textup{(III)}}
        +  \underbrace{(\hat{V}_T - V_h)^2\frac{\sum_t w_t^2}{\Sigma_T}}_{\textup{(IV)}}
    \end{align*}
    We handle each term by term, showing 
    \begin{align*}
        I &\xrightarrow{p} 1 \\
        II &= o_p(1) \\
        III&= o_p(1) \\
        IV &= o_p(1) \\
    \end{align*}
    Throughout, note that $\hat{V}_T -V_h = \sum_t w_t \Delta_t / \sum_t w_t$.

    \paragraph{Term {I}:}
    By part 1, $I \xrightarrow{p} 1$. 

    \paragraph{Term II:}
    We will invoke Lenglart's inequality \citep{lenglart1977relation} to demonstrate $\mathrm{II}=o_p(1)$. To that end, we bound the conditional variance of the numerator as follows. The summands $\xi_t=w_t^2(\Delta_t^2-\sigma_t^2)$ form a martingale difference sequence, since $w_t$ is $\Ft{t-1}$-measurable and $\E[\Delta_t^2\mid\Ft{t-1}]=\sigma_t^2$. Recalling $|\Delta_t| \leq M_t$, we have
    \begin{align*}
      \frac{1}{\Sigma_T^2} \sum_t\E\big[\xi_t^2\mid\Ft{t-1}\big] & \leq \frac{1}{\Sigma_T^2} \sum_t w_t^4\E\big[\Delta_t^4\mid\Ft{t-1}\big] \\
      &\leq \frac{1}{\Sigma_T^2}\sum_t w_t^4 M_t^2 \E\big[\Delta_t^2\mid\Ft{t-1}\big] \\
      &\leq \max_t w_t^2 M_t^2 \frac{ \sum_t w_t^2\sigma_t^2}{\Sigma_T^2} \\
      &= o_p(1)\cdot O_p(1)\\
      & =o_p(1)
    \end{align*}
    by \labelcref{ass:maxWeightRegularity} for the first factor and the first claim for the second.
    By Lenglart's inequality \citep{lenglart1977relation}, for any $a,\eta>0$,
    \[
      \pr\big(|\mathrm{II}|>a\big) =\pr\Big(\Big|\sum_t\xi_t\Big|>a\Sigma_T\Big)
      \leq \frac{\eta}{a^2}+\pr\Big(\frac{1}{\Sigma_T^2}\sum_t\E\big[\xi_t^2\mid\Ft{t-1}\big]>\eta\Big)
    \]
    The second term vanishes as $T\to\infty$, so $\limsup_T\pr(|\mathrm{II}|>a)\le\eta/a^2$. Since $\eta$ is arbitrary, $\mathrm{II}=o_p(1)$.

    \paragraph{Term IV:}
    Noting $\sum_t w_t \Delta_t = O_p(\Sigma^{1/2}_T)$ we have that
    \begin{align*}
        IV &= (\frac{\sum_t w_t \Delta_t}{\sum_t w_t})^2 \frac{\sum_t w_t^2}{\Sigma_T} \\
        &= O_p(\frac{\Sigma_T}{(\sum_t w_t^2)^2} \frac{\sum_t w_t^2}{\Sigma_T}) \\
        &= o_p(1)
    \end{align*}

    \paragraph{Term III:} This can be analyzed as follows
    \begin{align*}
        III &= \frac{\sum_t w_t \Delta_t}{\sum_t w_t}\cdot\frac{\sum_t w_t^2 \Delta_t}{\Sigma_T} \\
        &\le \frac{\sum_t w_t \Delta_t}{\sum_t w_t}\cdot \frac{\big(\sum_t w_t^2\big)^{1/2}\big(\sum_t w_t^2\Delta_t^2\big)^{1/2}}{\Sigma_T}, \quad \mbox{by Cauchy--Schwarz on the second term} \\
        &= \left(\frac{\sum_t w_t^2\Delta_t^2}{\Sigma_T}\right)^{1/2} \left(\Big(\frac{\sum_t w_t\Delta_t}{\sum_t w_t}\Big)^{2}\frac{\sum_t w_t^2}{\Sigma_T}\right)^{1/2} \\
        & = (I + II)^{1/2} (IV)^{1/2} \\
        &= o_p(1)
    \end{align*}
    where the last step leverages our analyses from I, II, and IV.
\end{proof}

\subsection{\texorpdfstring{Proof of \cref{thm:ptNormality}}{}}
\label{pf:ptNormality}
\begin{proof}
  \listsafesectioning
    Here we invoke the martingale CLT \citep[Thm.~3.2, Cor.~3.1]{hallheyde80}. Specifically, recall that almost surely
    \begin{align*}
        \Delta_t &= \hat\Gamma_t - V_h(\policy), \quad |\Delta_t| \leq M_t \\
        \Sigma_T &= \E[\sum_{t\le T}w_t^2\sigma_t^2]
    \end{align*}
    Consider the decomposition using $\hat{V}_T(\policy)-V_h(\policy)=\frac{\sum_t w_t \Delta_t}{\sum_t w_t}$,
    \begin{align*}
      \frac{\Vhat(\policy)-V(\policy)}{\hat\Sigma_T^{1/2}/\sum_tw_t}
      &=\big(1+k_T\big)Z_T
    \end{align*}
    where 
    \begin{align*}
        k_T&=\frac{\Sigma_T^{1/2}}{\hat\Sigma_T^{1/2}}-1 \\
        Z_T&=\underbrace{\frac{\sum_tw_t\Delta_t}{\Sigma_T^{1/2}}}_{v_T}
      +\underbrace{\frac{\big(\sum_tw_t\big)b_h(\policy)}{\Sigma_T^{1/2}}}_{b_T}
    \end{align*}     
    We will show $Z_T \rightsquigarrow \N(0,1)$ by showing  $v_T\rightsquigarrow\N(0,1)$ and $b_T\to_p 0$. Further, we claim by \cref{lemma:varEst}, $k_T = o_p(1)$, yielding the result.

    \paragraph{Distribution term $v_T$.} Define the martingale difference array $\xi_{t,T}=\frac{w_t \Delta_t}{\Sigma_T^{1/2}}$. 
    \begin{itemize}
      \item[(MG1)] \emph{Predictable variance.}
        
        Since $V_h(\policy)=\E[\hat\Gamma_t(\policy)\mid\Ft{t-1}]$ and $w_t$ is $\Ft{t-1}$-measurable,
        \begin{align*}
        \sum_{t\le T}\E\big[\xi_{t,T}^2 \mid \Ft{t-1}\big] &= \frac{1}{\Sigma_T}\sum_{t\le T}w_t^2 \E\big[(\hat\Gamma_t(\policy)-V_h(\policy))^2\mid\Ft{t-1}\big] \\
          &= \frac{1}{\Sigma_T}\sum_{t\le T}w_t^2 \V\big(\hat\Gamma_t(\policy)\mid\Ft{t-1}\big) \\
          &= \frac{1}{\Sigma_T} \sum_t w_t^2 \sigma_t^2 
        \end{align*}        

        This converges to 1 by \cref{lemma:varEst}.

      \item[(MG2)] \emph{Conditional Lindeberg.} 

        First recall our bound from \cref{lemma:biasVariance} on the martingale increment: $|\Delta_t| \le M_t = \frac{2\kappa\bar R}{h \varepsilon_t}+2\bar R$. This followed as on the kernel's support $|A_t-\policy(s_t)|\le h \le \delta_0$ we have, $|K_h|\le \frac{\kappa}{h}$, the importance weight is at most $1/\varepsilon_t$ by \labelcref{ass:exp}, and $|R_t-\hat\mu_{t-1}| \le 2 \bar R$ by \labelcref{ass:bdd}. 
        
        Now we demonstrate the Lindeberg Condition. Since $|\xi_{t,T}|\le \frac{w_t M_t}{\Sigma_T^{1/2}}$, $\{|\xi_{t,T}|>\delta\} \subseteq \{ \frac{w_t M_t}{\Sigma_T^{1/2}} > \delta\}$. Thus,
        \begin{align*}
          \sum_{t\le T}\E\big[\xi_{t,T}^2\I{|\xi_{t,T}|>\delta}\mid\Ft{t-1}\big] & \leq \sum_{t \leq T} \I{\frac{w_t M_t}{\Sigma_T^{1/2}} > \delta}\E\big[\xi_{t,T}^2\mid\Ft{t-1}\big] \\
            &\le \I{\frac{\max_t w_t M_t}{\Sigma_T^{1/2}} > \delta}\sum_{t\le T}\E\big[\xi_{t,T}^2\mid\Ft{t-1}\big] \\
            &= o_p(1) O_p(1) \\
            &= o_p(1)
        \end{align*}
        where the second to last line follows as the first expression is $o_p(1)$ by \labelcref{ass:maxWeightRegularity} and the second expression is $O_p(1)$ by (MG1).
    \end{itemize}

    \paragraph{Bias term $b_T$.} Since $b_h=V_h-V$ exactly, $Z_T-b_T=v_T\rightsquigarrow\N(0,1)$, and multiplying by $1+k_T=1+o_p(1)$ gives the first statement by Slutsky's. 
    
    Under \labelcref{ass:underSmooth},  
    \begin{align*}
        |b_T| &= \frac{\big(\sum_t w_t\big)\big|\underbrace{V_h(\policy)-V(\policy)}_{b_h(\policy)}\big|}{\Sigma_T^{1/2}} \\
        &\lesssim \frac{h^2 \sum_t w_t}{\Sigma_T^{1/2}},
        \quad \mbox{by the bias calculation in } \cref{lemma:biasVariance}\textup{(a)} \\
        &= o_p(1), \quad \mbox{by }\labelcref{ass:underSmooth}
    \end{align*}
    So we may omit the bias term.

\end{proof}

    
    

\paragraph{Lower bounds.}
Lower bounds depend on how many observations the design collects near the target policy; the lemma below will make this precise by characterizing the mutual information in a continuous action setting. Then, we will use this to derive lower bounds for mean squared error.

Fix a target policy $\pi$ and the context law $P_S$. Instances are indexed by the mean reward $\mu$, and we work over the class
\[
  \mathcal P(L,\nu)=\Big\{P_\mu:R\mid(S,A)\sim\mathcal N\big(\mu(S,A),\nu^2\big),\sup_{s,a}|\partial_a^2\mu(s,a)|\le L\Big\}
\]
of Gaussian reward models whose mean has bounded curvature in $a$. The value of $\pi$ is $V(\policy)=\E_S\mu(S,\policy)$. With $\pi^*(s)=\arg\max_a\mu(s,a)$ the optimal policy, the gap of action $a$ in context $s$ is $\zeta(s,a)=\mu(s,\pi^*(s))-\mu(s,a)\ge0$, and the regret is $\mathrm{Reg}_T=\sum_{t\le T}\zeta(S_t,A_t)$. Let $\mathcal N_{\policy}(h)=[\policy-h,\policy+h]$ be the $h$-window of the target and
\[
  N_h=\#\{t\le T:A_t\in\mathcal N_{\policy(s_t)}(h)\}
\]
the number of observations in it.

\begin{lemma}\label{lemma:KLD}
Let $\phi\in C^2$ satisfy $\phi(0)=1$ and $\operatorname{supp}\phi\subseteq[-1,1]$. Let $P_0\in\mathcal P(L/2,\nu)$ have mean $\mu_0$, and for $h>0$, $0<\lambda\le 1/(2\|\phi''\|_\infty)Lh^2$, define $P_1$ by
\[
  \mu_1(s,a)=\mu_0(s,a)+\lambda\phi\Big(\frac{a-\policy}{h}\Big)
\]
Then $P_1\in\mathcal P(L,\nu)$, and if the same algorithm is run under $P_0$ and $P_1$, we have satisfy
\[
  KL(P_0\|P_1)\le\frac{\lambda^2}{2\nu^2}\E_0[N_h]
\]
\end{lemma}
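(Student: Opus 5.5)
The plan has two independent parts: first check that the perturbed instance $P_1$ stays in $\mathcal P(L,\nu)$, then bound the KL divergence between the two interaction laws with the standard chain-rule (divergence decomposition) argument for adaptive algorithms.

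\textbf{Membership.} Only the mean changes, and it changes by $\lambda\phi((a-\policy(s))/h)$. This perturbation is $C^2$ in $a$, so $\mu_1(s,\cdot)$ inherits the regularity of $\mu_0(s,\cdot)$, and the noise law $\N(\mu_1,\nu^2)$ keeps the same variance. Differentiating twice gives
\[
\partial_a^2\mu_1(s,a)=\partial_a^2\mu_0(s,a)+\lambda h^{-2}\phi''\big((a-\policy(s))/h\big).
\]
Hence $\sup|\partial_a^2\mu_1|\le L/2+\lambda h^{-2}\|\phi''\|_\infty$. I read the constraint on $\lambda$ as $\lambda\le Lh^2/(2\|\phi''\|_\infty)$, so the second term is at most $L/2$ and $P_1\in\mathcal P(L,\nu)$.

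\textbf{KL decomposition.} Write the law of the full trajectory $(S_t,A_t,R_t)_{t\le T}$ under the algorithm and environment $P_i$. If the algorithm uses internal randomization, include it in $\pi_t$, which is allowed since $\pi_t$ is an arbitrary history-measurable kernel. This law factorizes round by round into three pieces: the context law $P_S$, the action kernel $\pi_t(\cdot\mid\Ft{t-1},S_t)$, and the reward law $\N(\mu_i(S_t,A_t),\nu^2)$. The context law is common to both environments, and the action kernels coincide because the same algorithm sees the same history. By the chain rule for KL, only the reward factors contribute:
\[
KL(P_0\|P_1)=\sum_{t\le T}\E_0\Big[KL\big(\N(\mu_0(S_t,A_t),\nu^2)\,\|\,\N(\mu_1(S_t,A_t),\nu^2)\big)\Big].
\]
The Gaussian KL with equal variance is $(\mu_0-\mu_1)^2/(2\nu^2)=\lambda^2\phi((A_t-\policy(S_t))/h)^2/(2\nu^2)$.

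\textbf{Counting step.} Since $\operatorname{supp}\phi\subseteq[-1,1]$, this term vanishes unless $A_t\in\mathcal N_{\policy(S_t)}(h)$, and there it is at most $\lambda^2\|\phi\|_\infty^2/(2\nu^2)$. Summing over $t$ and using linearity of expectation gives $\frac{\lambda^2}{2\nu^2}\|\phi\|_\infty^2\,\E_0[N_h]$.

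\textbf{Main obstacle.} The one point needing care is the factor $\|\phi\|_\infty^2$. The stated bound holds exactly when $\|\phi\|_\infty\le1$, which I would impose by taking $\phi$ to be a bump with its maximum value $1$ at $0$; the hypothesis $\phi(0)=1$ is consistent with this choice. Otherwise $\|\phi\|_\infty^2$ is an absolute constant that is absorbed into the $\gtrsim$ of \cref{thm:lb}.

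The chain-rule step itself is routine, but I would state it carefully. The argument must condition on $\Ft{t-1}$ and $S_t$ rather than treat the rounds as independent. This is precisely what allows adaptivity to enter only through $\E_0[N_h]$, the expected number of rounds the algorithm places in the target window.
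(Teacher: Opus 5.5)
Your proposal is correct and matches the paper's proof: the same chain-rule bound $\lambda h^{-2}\|\phi''\|_\infty\le L/2$ for membership, and the same divergence decomposition into per-round equal-variance Gaussian KLs, which vanish outside the $h$-window. Your caveat about $\|\phi\|_\infty\le 1$ is well placed. The paper's proof also uses $|\phi|\le 1$, justifying it by ``$\phi\in[-1,1]$'', which confuses the range of $\phi$ with its support and is not implied by the stated hypotheses. Adding it as an explicit assumption, or keeping the factor $\|\phi\|_\infty^2$, is the right fix.
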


\begin{proof}
    First, we address the claim $P_1\in\mathcal P(L,\nu)$. This follows as the second derivative in $a$ is indeed bounded by $L$. Write $u=(a-\policy)/h$, so that $du/da=h^{-1}$. By the chain rule,
    \[
      \partial_a\big[\lambda\phi(u)\big]=\lambda h^{-1}\phi'(u)
      \qquad
      \partial_a^2\big[\lambda\phi(u)\big]=\lambda h^{-2}\phi''(u)
    \]
    so the curvature of the perturbation satisfies
    \[
      \big|\lambda h^{-2}\phi''(u)\big|\le\lambda h^{-2}\|\phi''\|_\infty\le 1/(2\|\phi''\|_\infty)Lh^2\cdot h^{-2}\|\phi''\|_\infty=\frac{L}{2\|\phi''\|_\infty}\|\phi''\|_\infty=\frac L2
    \]
    using the bound on $\lambda$ that was assumed. Since $|\partial_a^2\mu_0|\le L/2$, it follows that $|\partial_a^2\mu_1|\le L/2+L/2=L$, yielding the claim $P_1\in\mathcal P(L,\nu)$.
    
    Second, we compute the KL bound.
    \begin{align*}
        KL(P_0 \| P_1) &= \sum_t \E_0 [\frac{(\mu_0 - \mu_1)^2 (S_t,A_t)}{2\nu^2}] \\
        &= \sum_t \E_0[\frac{\lambda^2 \phi^2 ( \frac{A_t - \policy(s_t)}{h})}{2 \nu^2}] \\
        &\leq \frac{\lambda^2 }{2\nu^2} \sum_t \pr_0(A_t \in \mathcal{N}_{\policy(s_t)}(h)) \\
        & = \frac{\lambda^2 }{2\nu^2} \E_0 [N_h]
    \end{align*}
    where the inequality in the third line follows from $\phi \in [-1,1]$, and only evaluating if the event in question $A_t \in \mathcal{N}_{\policy(s_t)}(h)$ happens.
\end{proof}

\subsection{\texorpdfstring{Proof of \cref{thm:lb}}{}}
\label{pf:lb}
    We prove the following result. Write $g^*_0(s)=\arg\max_a\mu_0(s,a)$, so that $\mathrm{Regret}_T=\sum_{t\le T}[\mu_0(S_t,g^*_0(S_t))-\mu_0(S_t,A_t)]$, and let $M=\sup_{P\in\mathcal P(L,\nu)}\E_P[(\hat V-V_P(\policy))^2]$.
    \begin{enumerate}[label=\textup{(\alph*)},leftmargin=1.6em]
      \item If $h=\big(4\|\phi''\|_\infty^2\nu^2/(L^2\E_0G)\big)^{1/5}\le\bar h$, then
      \[
        \inf_{\hat V}M\ge\frac{1}{16e}\Big(\frac{L}{2\|\phi''\|_\infty}\Big)^{2/5}\Big(\frac{\nu^2}{\E_0G}\Big)^{4/5}.
      \]
      \item Suppose $|\mu_0(s,a)-\mu_0(s,\policy(s))|\le L_V|a-\policy(s)|$ and let $h_M=\big(16\|\phi''\|_\infty M^{1/2}/L\big)^{1/2}$. If $h_M\le\bar h$ and $L_Vh_M\le\underline\Delta_\policy/2$, then
      \[
        \E_0[\mathrm{Regret}_T]\,M\ge\frac{\log2}{32}\underline\Delta_\policy\nu^2 .
      \]
    \end{enumerate}
\begin{proof}
    \emph{Bound (a).}
    Let $P_1$ and $P_0$ be the distributions described in \cref{lemma:KLD} 
    with $h \leq \bar h$ and height $\lambda = \frac{Lh^2}{2 \| \phi '' \|_\infty}$. 
    Note, assuming we can show that the KL between $P_0, P_1$ is bounded by 1, we can show by using Le Cam's two point lemma combined with Markov's concludes
    \begin{align*}
        \sup_{P \in \mathcal{P}(L,\nu)} \E_P[(\hat V - V_P(\policy))^2] &\geq \max_{j \in \{0,1 \}} \E_{P_j}[(\hat{V}-V_{P_j}(\policy))^2] \\
        &\geq (\frac{\lambda}{2})^2 \frac{1}{4e} 
    \end{align*}

    It remains to show how to choose $h$ so $\mathrm{KL}(P_0\|P_1)\le1$. 
    We will bound $N_h$, and then $KL$. Note that 
    \[
        \E_0[N_h]=\sum_{t\le T}\E_0\Big[\E_S\int_{\mathcal N_{\policy}(h)}\pi_t(a\mid S)da\Big]
        \le2h\sum_{t\le T}\E_0\Big[\E_S\sup_{a\in\mathcal N_{\policy}(\bar h)}\pi_t(a\mid S)\Big]
        =2h\E_0[G]
    \]
    Substituting this and $\lambda=cLh^2$ into \cref{lemma:KLD},
    \[
      \mathrm{KL}(P_0\|P_1)\le\frac{\lambda^2}{2\nu^2}\E_0[N_h]
      \le\frac{c^2L^2h^4}{2\nu^2}\cdot2h\E_0[G]
      =\frac{c^2L^2\E_0[G]}{\nu^2}h^5
    \]

    Set this to 1, and solve for $h$ that satisfies the bounded KLD to conclude
    \begin{align*}
        \frac{4L^2\E_0[G]}{\| \phi'' \|^2_\infty\nu^2}h^5 &= 1\\
        \iff h &= (\frac{\| \phi'' \|^2_\infty\nu^2}{4 L^2\E_0[G]})^{1/5}
    \end{align*}

    Plugging in this $h$, we find
    \[
      \lambda=\frac{2}{\|\phi''\|^2}Lh^2=\frac{2}{\| \phi'' \|_\infty}L\Big(\frac{\| \phi'' \|^2_\infty\nu^2}{4L^2\E_0G}\Big)^{2/5}
    \]

    Substitute this into our lower bound above to conclude the result.
    \[
      \Big(\frac\lambda2\Big)^2\frac1{4e}=\frac{\lambda^2}{16e}
      =\frac1{16e}\Big(\frac{L}{2\|\phi''\|_\infty}\Big)^{2/5}\Big(\frac{\nu^2}{\E_0G}\Big)^{4/5}
    \]

    \emph{Proof of (b).} We first show precision forces the algorithm to sample near the target. For $P_0,P_1$ as above with height $\lambda$, the same argument as in (a) gives
    \[
      M\ge\max_{j\in\{0,1\}}\E_{P_j}\big[(\hat V-V_{P_j}(\policy))^2\big]\ge\Big(\frac\lambda2\Big)^2\frac14e^{-\mathrm{KL}(P_0\|P_1)} 
    \]
    Now, choose the height $\lambda=8M^{1/2}$. The bump $\lambda\phi((a-\policy(s))/h)$ has second derivative at most $\lambda\|\phi''\|_\infty/h^2$, so choose the width $h_M$ to make this $L/2$, keeping $P_1\in\mathcal P(L,\nu)$:
    \[
      \frac{8M^{1/2}\|\phi''\|_\infty}{h_M^2}=\frac L2
      \iff
      h_M=\Big(\frac{16\|\phi''\|_\infty M^{1/2}}{L}\Big)^{1/2}
    \]
    Substituting into the lower bound on $M$ above,
    \[
      M\ge\frac{64M}{16}e^{-\mathrm{KL}(P_0\|P_1)}=4Me^{-\mathrm{KL}(P_0\|P_1)},
    \]
    Hence, dividing by $M$, $1\ge4e^{-\mathrm{KL}(P_0\|P_1)} \implies \mathrm{KL}(P_0\|P_1)\ge\log4$.
    
    Combining this with \cref{lemma:KLD} and $\lambda^2=64M$, $2\log2\le\mathrm{KL}(P_0\|P_1)\le\frac{32M}{\nu^2}\E_0[N_{h_M}]$ so $\E_0[N_{h_M}]\ge\frac{\log2}{16}\frac{\nu^2}{M}$.
\end{proof}

\section{Additional Simulation Details}\label{app:sim}
 
This section gives the details behind the simulation study. Throughout, $T=5000$, $S_t\sim\mathrm{Unif}[-1,1]$, $\A=[0,1]$, $R_t=\mu(S_t,A_t)+\epsilon_t$ with $\epsilon_t\sim\mathcal N(0,0.5^2)$, and $K(v)=\tfrac34(1-v^2)\I{|v|\le1}$, for which $R(K)=\tfrac35$ and $\mu_2(K)=\tfrac15$.

\subsection{Outcome models and true values}\label{app:sim:dgp}
\emph{Quadratic.} $g^*(s)=\tfrac12+\tfrac s8$ and $\partial_a^2\mu\equiv-4$, giving $V(\policy)=\tfrac{49}{96}$ on-target and $0.265$ off-target in closed form. The interior smoothing bias is then $V_h-V=-0.4h^2$.

\emph{Bimodal.} With $\sigma=0.09$ the ridges do not overlap, so $g^*(s)=0.75$ for $s>0$ and $0.25$ for $s<0$, with $V(\policy)\approx0.950$ on-target and $0.105$ off-target. At the peak $\E_S[\partial_a^2\mu]\approx-117$, so the leading bias is $-11.7h^2$, or $-0.019$ at $h=0.04$.

\emph{RFF.} $f(s,a)=\sqrt{2/D}\sum_{j\le D}\theta_j\cos(\omega_{s,j}s/\ell_s+\omega_{a,j}a/\ell_a+\beta_j)$ with: $D=64$, $(\ell_s,\ell_a)=(0.6,0.12)$, $\omega,\theta\sim\mathcal N(0,1)$, $\beta_j\sim\mathrm{Unif}[0,2\pi]$, drawn once, and $\mu=0.3+0.35(f-\bar f)/\mathrm{sd}(f)$. Here $g^*$ and all remaining true values are computed on a fine grid of size 200k.

\subsection{Logging design}\label{app:sim:logger}
The logging density is given by $\pi_t(a\mid s)=\rho_t+(1-\rho_t)\I{a\in B_t(s)}/|B_t(s)|$ where $B_t(s)=[\hat g_t(s)-b,\hat g_t(s)+b]\cap[0,1]$, $b=0.08$. 
The outcome fit $\hat\mu_t$, its degree $d$, and the bandwidth are chosen by forward-chaining 5-fold cross-validation, so that $\hat\mu_{t-1}$, $\hat g_t$, $h_t$ and $\pi_t$ are $\Ft{t-1}$-measurable. Refitting is done every 100 rounds, across degrees $2$ to $8$.

\subsection{Proposed estimator as implemented}\label{app:sim:ours}
\emph{Estimator.} For target dose $c$ and bandwidth $h$, we smooth the policy with the Epanechnikov kernel truncated to $[0,1]$ and renormalized, $K_{h,c}(a)\propto K((a-c)/h)\I{a\in[0,1]}$. Taking a second-order expansion of $\mu$ in $a$ gives the smoothing bias
\begin{equation}\label{eq:gap}
  V_h(\policy)-V(\policy)=\E_S\Big[\partial_a\mu(S,\policy(S))\phi_1(\policy(S),h)+\tfrac12\partial_a^2\mu(S,\policy(S))\phi_2(\policy(S),h)\Big]+O(h^3),
\end{equation}
where $\phi_1,\phi_2$ are the first two moments of $A-c$ under $K_{h,c}$. 

Each round contributes the standard kernel doubly robust score
\begin{align*}
  \hat\Gamma_t &= \int K_{h_t,\policy(S_t)}(a)\hat\mu_{t-1}(S_t,a)da+\frac{K_{h_t,\policy(S_t)}(A_t)}{\pi_t(A_t\mid S_t)}\big(R_t-\hat\mu_{t-1}(S_t,A_t)\big)
\end{align*}
and rounds are combined with inverse-variance weights $w_t$, computed from data before round $t$. 

We have
\begin{align*}
  \hat V &= \frac{\sum_tw_t\hat\Gamma_t}{\sum_tw_t}
  & \widehat{\mathrm{se}}_0^2 &= \frac{\sum_tw_t^2(\hat\Gamma_t-\hat V)^2}{\big(\sum_tw_t\big)^2}
\end{align*}

\emph{Bias correction.} We estimate the bias \eqref{eq:gap} with a local quadratic fit and subtract it. Let $U_t=A_t-\policy(S_t)$ and fit $R_t$ on the features $x_t=(S_t^jU_t^k)_{0\le j,k\le2}$ by least squares with weights $K(U_t/\lambda)$. We take $\lambda=2h_T$. 
This gives $\hat\beta$ and $G=\sum_tK(U_t/\lambda)x_tx_t^\top$ yielding an estimator of $\hat b=L^\top\hat\beta$ with
\begin{align*}
  L_{jk} &= \frac{\sum_tw_tS_t^j\phi_k(\policy(S_t),h_t)}{\sum_tw_t}, k=1,2; \quad  L_{j0} = 0
\end{align*}
We report $\hat V_{\rm bc}=\hat V-\hat b$, with standard error $\widehat{\mathrm{se}}^2=\widehat{\mathrm{se}}_0^2+\widehat{\V}(\hat b)-2\widehat{\mathrm{Cov}}(\hat V,\hat b)$.

\subsection{Baselines and evaluation}\label{app:sim:base}
We compare against four baselines. All use the true logged propensities $\pi_t(A_t\mid S_t)$, and those that need an outcome model use the same polynomial fit $\hat\mu$ as our method. Throughout, each uses a common bandwidth grid $h\in\{0.05,0.07,0.10,0.14\}$ for 5-fold cross-validation when relevant.
\begin{itemize}
    \item \emph{Kallus--Zhou} (self-normalized kernel IPW). With kernel weights $Z_t=K_{h,\policy(S_t)}(A_t)/\pi_t(A_t\mid S_t)$,
    \begin{align*}
      \hat V &= \frac{\sum_tZ_tR_t}{\sum_tZ_t}
      & \widehat{\mathrm{se}}^2 &= \frac{\sum_tZ_t^2(R_t-\hat V)^2}{\big(\sum_tZ_t\big)^2}
    \end{align*}
    \item \emph{Zhan} (binned doubly robust). The dose range is split into $B$ bins of width $1/B$, and the target dose is replaced by its bin $I_{k(s)}$. The score is
    \begin{align*}
      \hat\Gamma_t &= \hat\mu(S_t,m_{k})+\frac{\I{A_t\in I_{k}}}{e_t(S_t)}\big(R_t-\hat\mu(S_t,m_{k})\big)
      & e_t(s) &= \frac{\varepsilon_t}{B}+(1-\varepsilon_t)\frac{|I_{k(s)}\cap B_t(s)|}{|B_t(s)|}
    \end{align*}
    where $k=k(S_t)$, $m_k$ is the bin midpoint and $e_t$ is the probability that the logger lands in the bin. Rounds are combined with the variance-stabilizing weights $\sqrt{\bar e_t}$, where $\bar e_t$ is $e_t$ averaged over states, and the standard error is the martingale form used for our method. $B\in\{4,5,7,10,14,20,28\}$ is selected by forward-chaining cross-validation.
    \item \emph{Kennedy} (cross-fitted kernel doubly robust). The rounds are split into five random folds that ignore time order. On each fold, $\hat\mu$ is fit on the other four, and the pseudo-outcome is
    \begin{align*}
      \hat\Gamma_t &= \int K_{h,\policy(S_t)}(a)\hat\mu(S_t,a)da+\frac{K_{h,\policy(S_t)}(A_t)}{\pi_t(A_t\mid S_t)}\big(R_t-\hat\mu(S_t,A_t)\big)
    \end{align*}
    with the degree of $\hat\mu$ (and $h$) chosen by five-fold cross-validation. The estimate is $\bar\Gamma$ with the i.i.d.\ standard error $\mathrm{sd}(\hat\Gamma)/\sqrt n$
    \item \emph{Direct plug-in.} $\hat V=n^{-1}\sum_t\hat\mu(S_t,\policy(S_t))$ with $\hat\mu$ fit on all data. Its variance adds the sampling variance of $\hat\mu(S_t,\policy(S_t))$ over rounds, divided by $n$, to the model's estimation variance. 
\end{itemize}

Each cell is replicated $1000$ times. 
 
\subsection{Comparison On-Policy, Jiggered}\label{sec:policyCmp}
We run a comparison to the optimal policy, with a constant offset of 0.04. The results are shown in \cref{fig:coverage-full} and \cref{fig:mse-full} with the optimal and an off-policy target from the main text. 

The on-policy $+0.04$ target, a policy shifted $0.04$ from the optimum, is a challenging target policy: its kernel window lies near the edge of the logger's band. Our method covers nominally for $\alpha \le 0.5$ on all outcome models, but undercovers in the bimodal setting though still superior to its competitors.

\begin{figure*}[t]
  \centering
  \includegraphics[width=\textwidth]{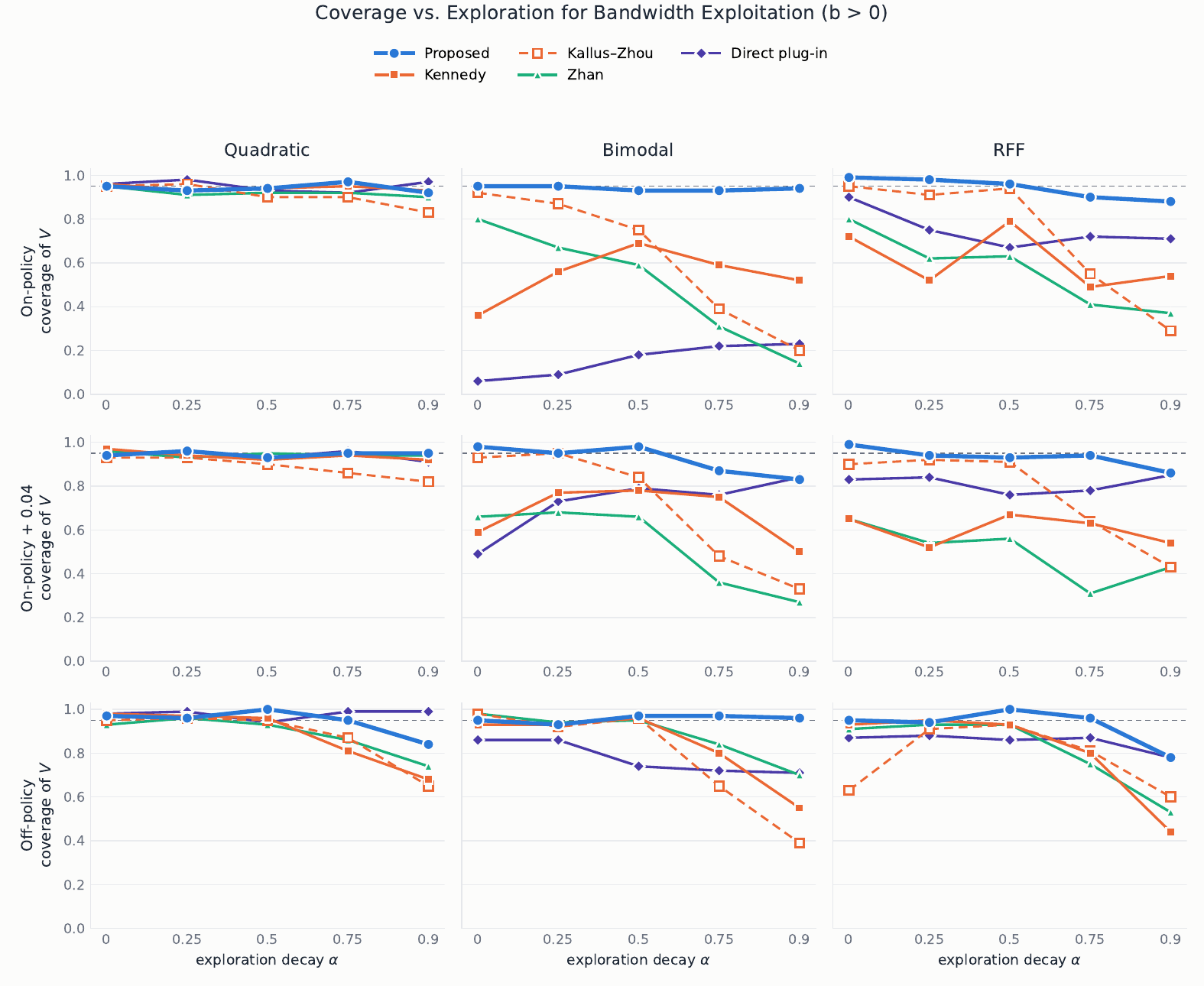}
  \caption{Coverage of nominal 95\% intervals as exploration decays ($\varepsilon_t = t^{-\alpha}$), for all three target policies (rows) and  outcome models (columns). The on-policy $+ 0.04$ target evaluates a policy shifted $0.04$ from the optimum. Bandwidth exploitation with $b = 0.12$; bandwidth by 5-fold forward-chaining cross-validation for all methods; $T = 5000$, 1000 replications per cell. Dashed line: $0.95$.}
  \label{fig:coverage-full}
\end{figure*}

\begin{figure*}[t]
  \centering
  \includegraphics[width=\textwidth]{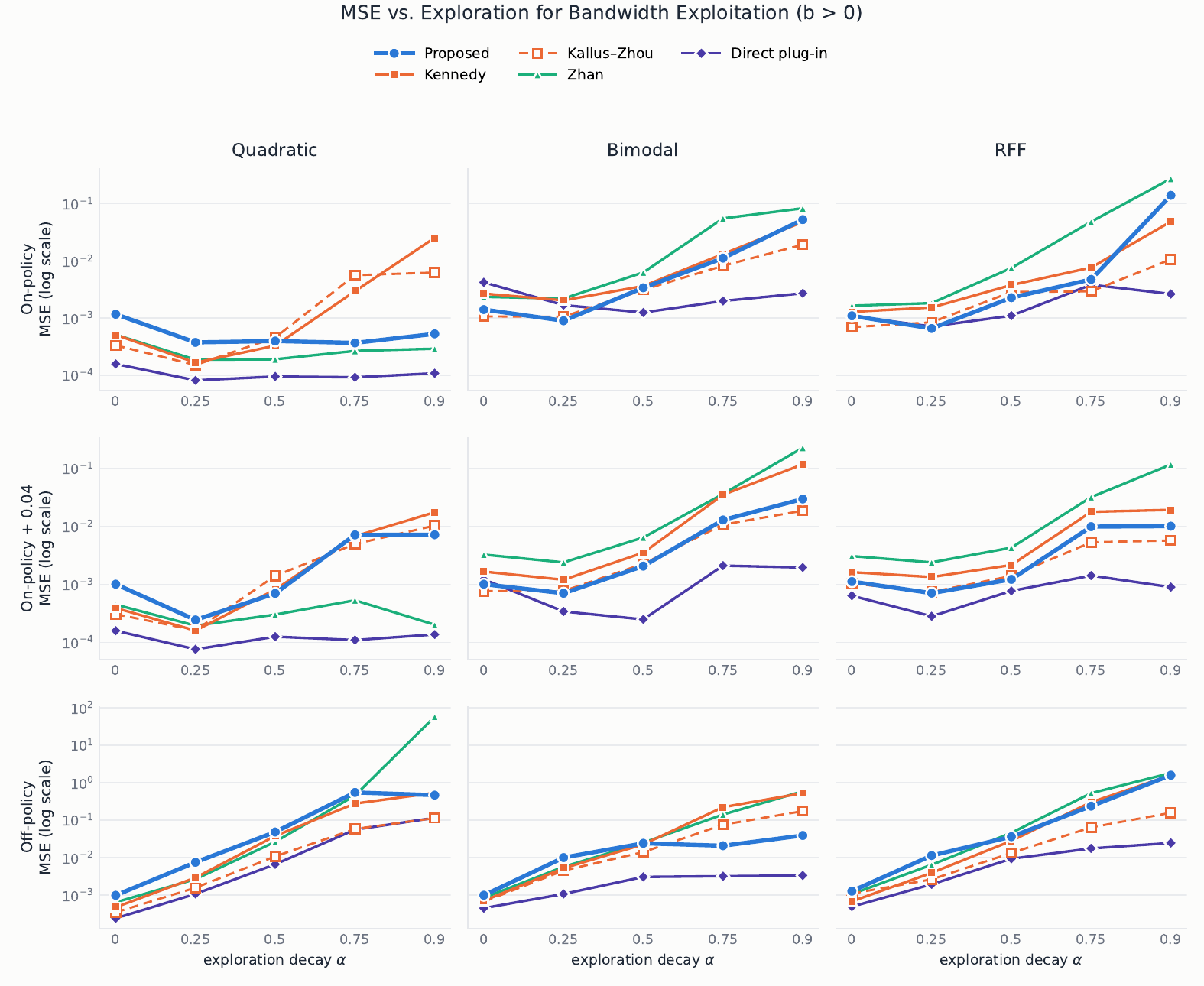}
  \caption{Mean squared error (log scale) of the policy-value estimates; same design as \cref{fig:coverage-full}.}
  \label{fig:mse-full}
\end{figure*}

\subsection{Exploitation Comparison: Bandwidth and Point Dose}\label{sec:atomCmp}

In this section, we compare the exploitation approaches with $b>0$ and $b=0$. We see from \cref{fig:atomCmp}, bandwidth exploitation is critical for inference. Relative to point-mass exploitation, the effective sample size at on-policy targets grows significantly, shrinking interval width by a factor of 3 to 70, and attaining nominal coverage once exploration decays ($\alpha \ge 0.75$). For off-policy targets, in which the two effective sample sizes are similar, the two approaches perform similarly.

\begin{figure*}[ht!]
  \centering
  \begin{subfigure}[t]{0.49\textwidth}
    \centering
    \includegraphics[width=\linewidth]{figs/headline_coverage_onecol.pdf}
    \caption{Bandwith exploitation ($b = 0.12$).}
    \label{fig:cov-jitter}
  \end{subfigure}\hfill
  \begin{subfigure}[t]{0.49\textwidth}
    \centering
    \includegraphics[width=\linewidth]{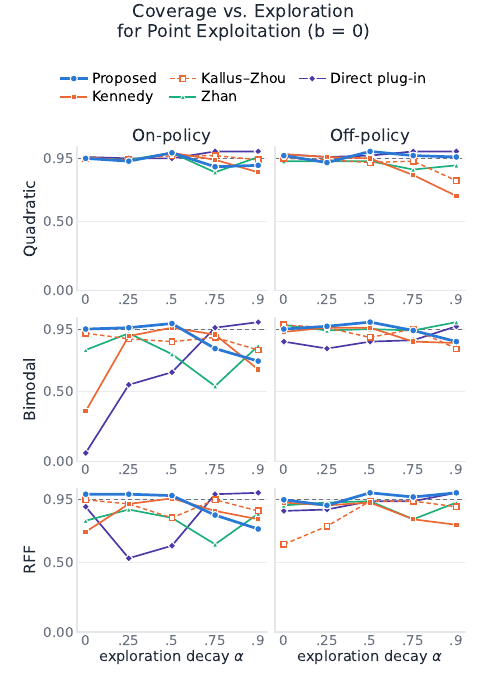}
    \caption{Point-mass exploitation ($b = 0$).}
    \label{fig:cov-atom}
  \end{subfigure}
  \caption{Coverage of nominal 95\% intervals as exploration decays ($\varepsilon_t = t^{-\alpha}$), for the on-policy target (left column of each panel) and an off-policy rule (right column) on three outcome (rows). Bandwidth exploitation increases the effective sample size near the exploited policy, attaining nominal coverage.
  Bandwidth by 5-fold forward-chaining cross-validation; $T = 5000$, 1000 replications per cell. Dashed line: $0.95$.}
  \label{fig:atomCmp}
\end{figure*}

\end{document}